\tikzset{snake it/.style={decorate, decoration=snake}}
\newcommand\blank[1][.6em]{%
  \mbox{\kern.06em\vrule height.5ex}%
  \vbox{\hrule width#1}%
  \hbox{\vrule height.5ex}}
\begin{document}
\begin{frontmatter}
\title{Strong edge-coloring of $(3, \Delta)$-bipartite graphs\tnoteref{t1}}
\tnotetext[t1]{This research is partially supported by ANR Grant \textsc{STINT} - ANR-13-BS02-0007.}

\author[LIP]{Julien Bensmail}
\author[LIP]{Aurélie Lagoutte}
\author[LIF]{Petru Valicov}
\address[LIP]{LIP, UMR 5668 ENS Lyon, CNRS, UCBL, INRIA, Universit\'e de Lyon, France}
\address[LIF]{Aix-Marseille Université, CNRS, LIF UMR 7279, 13288, Marseille, France}

\begin{abstract}
A strong edge-coloring of a graph $G$ is an assignment of colors to edges such that every color class induces a matching. We here focus on bipartite graphs whose one part is of maximum degree at most~$3$ and the other part is of maximum degree $\Delta$. For every such graph, we prove that a strong $4\Delta$-edge-coloring can always be obtained.
Together with a result of Steger and Yu, this result confirms a conjecture of Faudree, Gy\'arf\'as, Schelp and Tuza for this class of graphs.
\end{abstract}

 \begin{keyword} 
Strong edge-coloring, bipartite graphs, complexity
 \end{keyword}
 
\end{frontmatter}

\newenvironment{proof}{\par \noindent \textbf{Proof} \\}{\hfill$\Box$}

\newtheorem{theorem}{Theorem}
\newtheorem{lemma}{Lemma}
\newtheorem{conjecture}{Conjecture}
\newtheorem{observation}{Observation}
\newtheorem{claim}{Claim}

\section{Introduction}

One common notion of graph theory is the one of \textit{proper edge-coloring}, which is, given an undirected simple graph $G = (V, E)$, an assignment of colors to the edges such that no two adjacent edges receive the same color.  
A proper edge-coloring can equivalently be seen as a partition of the edges into \textit{matchings}. One can easily convince himself that these matchings are generally not induced. If we want each matching of the partition to be induced, then in every part all edges must be sufficiently far apart in the graph. In this perspective, Fouquet and Jolivet introduced the following stronger notion~\cite{FJ83}: a \textit{strong edge-coloring} of $G$ is a proper edge-coloring such that every two edges joined by another edge are colored differently. Clearly, every color class of a given strong edge-coloring is an induced matching. The least number of colors in a strong edge-coloring is referred to as the \textit{strong chromatic index}, denoted $\chi'_s(G)$ for $G$.

We denote by $\Delta(G)$ (or simply $\Delta$ when no ambiguity is possible) the \textit{maximum degree} of $G$.
If $S$ is a subset of vertices of a graph, we refer to $\Delta(S)$ as the maximum degree of the vertices of $S$. Greedy coloring arguments show that $2\Delta^2 - 2\Delta + 1$ is a naive upper bound on the strong chromatic index of any graph. But so many colors are generally not necessary to obtain a strong edge-coloring. Actually, the tightest upper bound on $\chi'_s(G)$ involving $\Delta$ is believed to be the following.

\begin{conjecture}[Erd\H{o}s and Ne\v{s}et\v{r}il~\cite{EN89}] \label{conjecture:global-strong}
For every graph $G$, we have 

\begin{equation*}
  \chi'_s(G) \leq 
     \begin{cases}
        \frac{5}{4}\Delta^2 & \text{if $\Delta$ is even,} \\
        \frac{1}{4}(5\Delta^2 - 2\Delta + 1)  & \text{if $\Delta$ is odd,}
     \end{cases}
\end{equation*}
which, if true, would be tight as the graphs described on Figure~\ref{fig:erdos_nesetril} achieve these bounds.
\end{conjecture}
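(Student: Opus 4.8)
The plan is to split the statement into its two halves: the upper bound $\chi'_s(G)\le \tfrac54\Delta^2$ (resp.\ $\tfrac14(5\Delta^2-2\Delta+1)$ for odd $\Delta$), which is the substantive claim, and the tightness assertion, which is a matter of exhibiting the right extremal graphs. I would begin by recording the standard reformulation $\chi'_s(G)=\chi(L(G)^2)$, where $L(G)$ is the line graph of $G$ and $H^2$ denotes the square of $H$ (two vertices adjacent whenever they lie at distance at most $2$). An edge $uv$ is adjacent in $L(G)^2$ exactly to the edges sharing an endpoint with it and to those one further step away, so the maximum degree of $L(G)^2$ is at most $2\Delta(\Delta-1)=2\Delta^2-2\Delta$; greedily colouring recovers the naive bound $2\Delta^2-2\Delta+1$ quoted above. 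The whole difficulty is to push the leading constant from $2$ down to $\tfrac54$.

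For the tightness clause I would describe the graphs of Figure~\ref{fig:erdos_nesetril} explicitly as blow-ups of the $5$-cycle: replace each of its five vertices by an independent set, joining two sets completely whenever the corresponding vertices are adjacent on the cycle. When $\Delta$ is even, taking all five sets of size $\Delta/2$ yields a $\Delta$-regular graph in which every two edges either share an endpoint or are joined by an edge, so all $5\cdot(\Delta/2)^2=\tfrac54\Delta^2$ edges must receive distinct colours; this forces $\chi'_s\ge\tfrac54\Delta^2$ and, granting the upper bound, equality. For odd $\Delta$ I would instead take two consecutive parts of size $\frac{\Delta+1}{2}$ and the remaining three of size $\frac{\Delta-1}{2}$; a direct check shows the maximum degree is exactly $\Delta$, and summing the products of consecutive part sizes gives exactly $\tfrac14(5\Delta^2-2\Delta+1)$ mutually conflicting edges.

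The core of the argument is the upper bound, and here I would turn to the probabilistic method in the spirit of Molloy and Reed. The key structural observation is that the neighbourhood of a vertex in $L(G)^2$ is \emph{far from being a clique}: among the roughly $2\Delta^2$ edges within distance $2$ of a fixed edge $uv$, many pairs are themselves at distance greater than $2$, so the local clique number sits well below the degree. Quantifying this local sparsity and feeding it into a semi-random (R\"odl nibble) colouring, or into a single random partial colouring analysed with the Lov\'asz Local Lemma, should shave a constant factor off the greedy bound: one reserves a palette, colours edges at random, discards conflicts, and bounds the probability that a given edge stays troublesome by a function of the local sparsity, the bad events being controllable because each depends on only polynomially many others.

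The hard part --- and the reason this remains a conjecture rather than a theorem --- is that the target constant $\tfrac54$ lies far below what these techniques currently reach; the best general bounds obtained this way are of the form $(2-\varepsilon)\Delta^2$ with $\varepsilon$ small, still well above $\tfrac54\Delta^2$. The obstruction is structural: the blow-ups above are genuinely extremal and leave no slack, so any successful proof must be essentially tight exactly on them while gaining everywhere else. I would therefore expect a complete argument to require a decomposition separating the dense, blow-up-like portions of $G$ (handled by an exact local analysis) from the genuinely sparse remainder (handled probabilistically), and reconciling the two bounds across the interface is precisely the step I do not see how to carry out in general.
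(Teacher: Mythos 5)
This statement is a conjecture, not a theorem --- the paper offers no proof of the upper bound (it is still open) and justifies tightness only by the construction of Figure~\ref{fig:erdos_nesetril} --- and your proposal rightly treats it the same way: your blow-up of $C_5$ with all parts of size $\Delta/2$ (even case) and, for $\Delta=2k+1$, three parts of size $k$ plus two \emph{consecutive} parts of size $k+1$ (odd case) is exactly the paper's construction, your edge counts $5(\Delta/2)^2=\tfrac{5}{4}\Delta^2$ and $2k^2+2k(k+1)+(k+1)^2=5k^2+4k+1=\tfrac{1}{4}(5\Delta^2-2\Delta+1)$ are correct, the degree check confirms $\Delta$-regularity, and the observation that any two edges of a $C_5$-blow-up are mutually visible correctly forces $\chi'_s\geq |E|$. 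One small strengthening you could add: since every two edges conflict and trivially $\chi'_s\leq |E|$ for any graph, these graphs satisfy $\chi'_s=|E|$ \emph{unconditionally}, so the construction achieves the conjectured bounds exactly without ``granting the upper bound''; your honest assessment of why the upper bound itself is out of reach of current nibble/Local-Lemma techniques (which give only $(2-\varepsilon)\Delta^2$) is accurate and matches the paper's framing of the statement as open.
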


\newcommand{\join}{\bowtie} 
\begin{figure}[!t]
\centering
\begin{tikzpicture}[join=bevel,inner sep=0.5mm,scale=0.9]
\node[draw, circle, line width=1pt, minimum width=28pt](45) at (18:2) {$I_1$};
\node[draw, circle, line width=1pt, minimum width=28pt](15) at (90:2) {$I_2$};
\node[draw, circle, line width=1pt, minimum width=28pt](12) at (162:2) {$I_3$};
\node[draw, circle, line width=1pt, minimum width=28pt](23) at (234:2) {$I_4$};
\node[draw, circle, line width=1pt, minimum width=28pt](34) at (306:2) {$I_5$};
\draw[-,dashed,line width=0.8pt] (12) -- (15) -- (45) -- (34) -- (23) -- (12);

 \node[rotate=145,fill=white] (join) at (54:1.62) {\Large $\join$};
 \node[rotate=-145,fill=white] (join) at (126:1.62) {\Large $\join$};
 \node[rotate=108,fill=white] (join) at (198:1.62) {\Large $\join$};
 \node[fill=white] (join) at (270:1.62) {\Large $\join$};
 \node[rotate=72,fill=white] (join) at (342:1.62) {\Large $\join$};
 
 \node[right] at (4,2) {$\bullet$ Every $I_j$ is an independent set.};
  \node[right] at (4,1) {$\bullet$ ‘‘$I_j \join I_{j'}$'' means that $I_j$ is complete to $I_{j'}$.};
 \node[right] at (4,0) {$\bullet$ If $\Delta = 2k$, then $ \left|{I_j}\right|=k$.};
 \node[right] at (4,-1) {$\bullet$ If $\Delta = 2k+1$, then $\left|{I_1}\right|=\left|{I_2}\right|=\left|{I_3}\right|=k$};
 \node[right] at (6.8,-1.6) {  ~and $\left|{I_4}\right|=\left|{I_5}\right|=k+1$.};
\end{tikzpicture}
\caption{Erd\H{o}s and Ne\v{s}et\v{r}il's construction.}
\label{fig:erdos_nesetril}
\end{figure}

\noindent This conjecture was verified for graphs of maximum degree at most~3~\cite{A92,HQT93}, and also considered in other situations~\cite{MR97,C06}. But it remains still widely open in general.

\medskip

In this paper we focus on strong edge-coloring of \textit{bipartite graphs}, which are graphs whose vertex set admits a bipartition into two independent sets. In this context, Conjecture~\ref{conjecture:global-strong} was strengthened to the following by Faudree, Gy\'arf\'as, Schelp and Tuza:

\begin{conjecture}[Faudree \emph{et al.}~\cite{FGST90}]
\label{conj:faudree_bipartite}
For every bipartite graph $G$, we have $\chi'_s(G)\leq \Delta^2$.
\end{conjecture}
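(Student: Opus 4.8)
The plan is to first reformulate the problem and then attack it through the bipartite structure. Write $G$ with bipartition $A \cup B$, and recall that a strong edge-coloring of $G$ is exactly a proper vertex-coloring of the square $L(G)^2$ of the line graph: two edges $uv$ and $u'v'$ (with $u,u' \in A$ and $v,v' \in B$) must receive distinct colors precisely when they share an endpoint, or when one of $uv'$, $u'v$ is an edge of $G$. The target $\Delta^2$ is exactly half of the naive greedy bound $2\Delta^2 - 2\Delta + 1$, and the natural object that suggests it is the product coloring: index the edges at each vertex $u \in A$ by $\alpha(uv) \in \{1, \ldots, \deg(u)\}$ and the edges at each vertex $v \in B$ by $\beta(uv) \in \{1, \ldots, \deg(v)\}$, then color $uv$ by the pair $(\alpha(uv), \beta(uv)) \in \{1,\ldots,\Delta\}^2$. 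This uses exactly $\Delta^2$ colors and is automatically \emph{proper}, so the whole difficulty is concentrated in the distance-two constraints.

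First I would try to upgrade this product coloring to a strong one. By König's theorem $E(G)$ decomposes into $\Delta$ matchings $M_1, \ldots, M_\Delta$; taking $\alpha$ to be the matching index gives a first coordinate that is already globally consistent and conflict-free for adjacency. It then remains to choose the second coordinate $\beta$ from a palette of size $\Delta$ so that no two edges sharing their first coordinate are joined by an edge of $G$. Within a single matching $M_i$ each edge is joined to at most $2(\Delta-1)$ others, so \emph{locally} a $\beta$-palette of size $2\Delta - 1$ suffices; the crux is to re-use the same $\Delta$ values of $\beta$ across all the matchings while still respecting the cross-matching distance-two constraints. I would attempt this either by a careful inductive deletion of reducible configurations (a low-degree vertex, or an edge whose endpoints have a small combined neighbourhood) with recolouring after deletion, or by a Lovász-Local-Lemma argument in which each edge picks $\beta$ independently and one bounds the dependency among the bad events ``two joined edges clash''.

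The hard part — and the reason Conjecture~\ref{conj:faudree_bipartite} remains open in full generality — is precisely eliminating the distance-two conflicts without doubling the palette. Every naive scheme (fresh colours per matching, or greedy extension in $L(G)^2$) stalls at roughly $2\Delta^2$, and compressing the constant to exactly $1$ is where the genuine obstruction sits: the second coordinate cannot be selected independently at each vertex, because a single degree-$\Delta$ vertex of $B$ forces its incident edges to interact with up to $\Delta(\Delta-1)$ edges spread across many matchings, so the ``locally $2\Delta-1$ colours'' budget does not obviously shrink to $\Delta$. Realistically I would therefore expect to establish the bound only under structural restrictions — for instance by bounding $\Delta(A)$, which is exactly the route this paper takes, in combination with the result of Steger and Yu — and to regard the unrestricted bipartite case as the genuinely difficult residue that the conjecture isolates.
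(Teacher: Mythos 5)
You have not proved the statement, and you could not have by comparing against the paper: Conjecture~\ref{conj:faudree_bipartite} is quoted from Faudree, Gy\'arf\'as, Schelp and Tuza and is explicitly described in the paper as still widely open. The paper contains no proof of it; it only establishes the special case of $(3,\Delta)$-bipartite graphs with the weaker bound $4\Delta$ (Theorem~\ref{thm:main}), which together with Steger and Yu's Theorem~\ref{thm:subcubic} confirms the conjecture on that class. Your final paragraph concedes exactly this, so your text is an honest research plan rather than a proof, and it must be judged as containing a genuine gap: the two mechanisms you offer for the crucial step (a Lov\'asz-Local-Lemma argument, or inductive deletion of reducible configurations) are never carried out, and the central difficulty you name --- choosing the second coordinate $\beta$ from only $\Delta$ values --- is precisely where the proposal stops.

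Moreover, the specific product scheme you set up fails already on trivially small instances, so the gap is not merely one of missing details. Take $G=C_6$, so $\Delta=2$ and $\chi'_s(C_6)=3\leq\Delta^2$. K\H{o}nig's theorem decomposes $E(C_6)$ into two perfect matchings, and within either matching the three edges are pairwise joined by an edge of the cycle, so the conflict graph of each matching class is $K_3$; no assignment $\beta\in\{1,2\}$ can then be proper on a class, whatever $\alpha$-decomposition you pick. In general the conflict graph of a single matching in a bipartite graph of maximum degree $\Delta$ can contain cliques of size up to $2\Delta-1$, so the ``locally $2\Delta-1$ colours'' budget provably cannot be compressed to $\Delta$ per matching by any local or greedy argument, and the matching-indexed first coordinate is the wrong invariant. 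It is worth noting that your instinct for a pair-coloring is sound in spirit: the paper's proof of Theorem~\ref{thm:main} does color each edge with a pair $(i,j)$, but $j$ is the \emph{column} of the edge in a carefully chosen maximum matrix $M_G$ (so that, by Observation~\ref{obs:vertexB_distinct_colors}, it behaves like your $\beta$ at vertices of $B$), and the whole technical content lies in the extremal choice of $M_G$ and the structural Observations~\ref{obs:2vertices_common_neighbour}--\ref{obs:a_branch} and Lemma~\ref{lemma:component_cycles}, which make a $4$-valued coordinate $i\in\{1,2,3,\ast\}$ suffice --- machinery with no analogue in your sketch, and which currently only works when $\Delta(A)\leq 3$.
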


Brualdi and Quinn Massey introduced a new notion of edge-coloring -- the incidence coloring of graphs~\cite{BQM93}. They showed a connection of this notion with the one of strong edge-coloring, which made them refine Conjecture~\ref{conj:faudree_bipartite}. 

\begin{conjecture}[Brualdi and Quinn Massey~\cite{BQM93}]
\label{conj:brualdi_bipartite}
For every bipartite graph $G$ with bipartition $A$ and $B$, we have $\chi'_s(G)\leq \Delta(A)\Delta(B)$.
\end{conjecture}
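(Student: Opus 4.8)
The plan is to recast the problem as a bounded colouring of the square of the line graph: a strong edge-colouring of $G$ is exactly a proper vertex-colouring of $L(G)^2$, so the statement asks for $\chi(L(G)^2)\le \Delta(A)\Delta(B)$. The natural palette of the right size is the product $\{1,\dots,\Delta(A)\}\times\{1,\dots,\Delta(B)\}$, and the first step is to exploit the bipartition to split a colour into an ``$A$-coordinate'' and a ``$B$-coordinate''. Concretely, for each $a\in A$ I would label its incident edges with distinct values $\alpha(e)\in\{1,\dots,\Delta(A)\}$ (possible since $\deg(a)\le\Delta(A)$), and independently for each $b\in B$ label its incident edges with distinct values $\beta(e)\in\{1,\dots,\Delta(B)\}$; the tentative colour of $e$ is the pair $(\alpha(e),\beta(e))$. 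This product colouring already settles every \emph{adjacency} conflict for free: two edges sharing their $A$-endpoint differ in the first coordinate and two sharing their $B$-endpoint differ in the second. The whole difficulty is therefore concentrated on the \emph{distance-two} conflicts, i.e.\ pairs of edges $e=ab$ and $f=a'b'$ joined by a third edge $a'b$ or $ab'$, which the two independent labellings do not automatically separate.

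The second step is to understand why separating these distance-two pairs is the real content of the conjecture. A direct greedy count shows that an edge $ab$ conflicts with at most $(\Delta(A)-1)+(\Delta(B)-1)+2(\Delta(A)-1)(\Delta(B)-1)$ other edges, which is close to $2\Delta(A)\Delta(B)$ --- about twice the target. The two quadratic terms come precisely from the two families of distance-two edges (those reached through $a$ and those reached through $b$), so beating the greedy bound down to $\Delta(A)\Delta(B)$ amounts to showing that these two families can be made to overlap inside the product palette, rather than forbidding disjoint colour sets. My plan is to make the two coordinate-labellings \emph{cooperate}: process the vertices of the smaller side $A$ in a carefully chosen order and extend the labelling one $A$-vertex at a time, so that when the edges at a new vertex $a$ are coloured, the colours already forbidden along the $A$-direction (those sharing $\alpha$) and along the $B$-direction (those sharing $\beta$) are arranged to lie in a common sub-rectangle of $\{1,\dots,\Delta(A)\}\times\{1,\dots,\Delta(B)\}$, leaving a feasible choice among the $\Delta(A)\Delta(B)$ colours. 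When no such choice survives, I would invoke Steger--Yu-style exchange (augmenting-recolouring) arguments on alternating structures in $L(G)^2$ to free a colour.

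The third step is to specialise to the regime of this paper, $\Delta(A)=3$, where the target becomes the exact bound $3\Delta$. With at most three edges at every vertex of $A$, the local configurations around an $A$-vertex fall into a short list according to its degree and to how its $B$-neighbours share further neighbours, and I would run a discharging and case analysis over this list to certify that the $3\Delta$ product palette always admits a consistent extension. The main obstacle --- and the reason the conjecture remains open in general --- is exactly the factor-two gap identified above: the bipartite structure must be pushed to its limit to merge the two quadratic conflict families with \emph{no slack whatsoever}, and I expect the hardest cases to be the dense ones, where an $A$-vertex together with its neighbours already forces many of the $3\Delta$ colours, so that the exchange argument must recolour a nontrivial alternating region rather than make a purely local greedy choice. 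Controlling the size and the termination of these recolourings is where the plan is most likely to require the full strength of a tailored structural lemma.
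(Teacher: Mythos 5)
The statement you were asked about is Conjecture~\ref{conj:brualdi_bipartite}, which is \emph{open}: the paper contains no proof of it, and its main result, Theorem~\ref{thm:main}, establishes only the weaker bound $4\Delta$ in the special case $\Delta(A)\leq 3$ (against the conjectured $3\Delta$). Your text is accordingly not a proof but a research plan, and every step where the actual difficulty lives is a placeholder. Step one (the product palette $\{1,\dots,\Delta(A)\}\times\{1,\dots,\Delta(B)\}$ with independent edge-labellings at $A$- and at $B$-vertices) is sound but only disposes of adjacency conflicts; you say so yourself. Step two, where the labellings must ``cooperate'' so that the two quadratic families of distance-two conflicts overlap inside the palette, is exactly the content of the conjecture, and you give no mechanism for it: no vertex order is specified, no invariant is maintained, and the fallback ``Steger--Yu-style exchange arguments on alternating structures in $L(G)^2$'' is invoked without defining the alternating structures, the recolouring move, or a termination argument. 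Step three claims that for $\Delta(A)=3$ a discharging over local configurations will certify the exact $3\Delta$ bound; this would be a result strictly stronger than Theorem~\ref{thm:main} and is precisely what the authors state they could not do, so asserting it without any case analysis is a genuine gap, not an omitted routine verification.

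It is worth noting that the paper's actual $4\Delta$ argument is, in effect, a concrete instantiation of your product-palette idea, which shows both why your intuition is reasonable and where it breaks. The columns of the matrix $M_G$ fix the $B$-coordinate of every colour (Observation~\ref{obs:vertexB_distinct_colors} is exactly the statement that this is a proper labelling at each $B$-vertex), and the lexicographic maximality of $M_G$ over the counts of Type~1 and Type~2 vertices is the precise form of the ``cooperation'' you gesture at. Even so, the authors cannot complete the colouring with first coordinate in $\{1,2,3\}$ alone: in Step~4 the lonely edges of a column can form alternate cycles (Lemma~\ref{lemma:component_cycles}), and on each such cycle one edge may see three already-coloured edges of its column, forcing the extra symbol $\ast$ and hence the palette $\{1,2,3,\ast\}\times\{1,\dots,\Delta\}$ of size $4\Delta$. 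The paper's conclusion identifies the exact obstruction (odd alternate cycles whose $A$-vertices all sit at distance~2 from Type~1 vertices of the same column, all coloured identically), and eliminating it is what your plan would have to achieve; nothing in your proposal engages with this configuration. So the verdict is that your proposal correctly locates the difficulty but proves nothing beyond what a greedy count gives, and its claimed endpoint overshoots both the paper and the current state of the art.
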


In the spirit of this conjecture, we define a \emph{$(d_A, d_B)$-bipartite graph} to be a bipartite graph with parts $A$ and $B$ such that $\Delta(A)\leq d_A$ and $\Delta(B)\leq d_B$.
Conjectures~\ref{conj:faudree_bipartite} and~\ref{conj:brualdi_bipartite} are still widely open, the second being proved to hold in two specific non-trivial situations. It is first known to hold whenever $G$ is subcubic bipartite:

\begin{theorem}[\label{thm:subcubic}Steger and Yu~\cite{SY93}]
For every $(3, 3)$-bipartite graph $G$, we have $\chi'_s(G)\leq 9$.
\end{theorem}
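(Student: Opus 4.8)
The plan is to argue by induction on the number of edges, assuming for contradiction that $G$ is a $(3,3)$-bipartite graph with $\chi'_s(G) > 9$ having a minimum number of edges. The basic tool, as usual for strong-coloring results, is a counting bound: when we delete an edge $uv$, colour $G - uv$ with $9$ colours by minimality, and try to reinsert it, the number of colours forbidden at $uv$ is at most the number of edges at \emph{strong distance} one from it, that is, at most
\begin{equation*}
(d_u-1)+(d_v-1)+\sum_{w\in N(u)\setminus\{v\}}(d_w-1)+\sum_{x\in N(v)\setminus\{u\}}(d_x-1),
\end{equation*}
where $d_z$ denotes the degree of a vertex $z$. First I would use this to clear away all low-degree configurations: a pendant edge, or an edge both of whose endpoints have degree~$2$, drives this count to at most $6 < 9$, so the extension succeeds and such configurations cannot occur in a minimal counterexample. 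Pushing these local discharging-style arguments as far as possible should reduce the problem to graphs that are essentially $3$-regular, where every edge has exactly $12$ potential conflicts.

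The difficulty lies precisely in this regular regime: with $12$ neighbouring edges and only $9$ colours, a direct greedy extension can fail, since all nine colours may already appear among the conflicts. Here I would switch to a more global viewpoint. By König's theorem, $G$ being bipartite with $\Delta\le 3$ admits a proper $3$-edge-colouring into matchings $M_1,M_2,M_3$; I would then colour each edge by a pair $(i,s)$, with $i$ its matching index and $s\in\{1,2,3\}$ a \emph{sub-colour}, giving $9$ colours in total. The key simplification is that two edges lying in distinct matchings already receive different pairs, so \emph{all} conflicts across matchings are resolved for free; it remains only to choose, for each $i$, a proper $3$-colouring of the \emph{conflict graph} $H_i$ whose vertices are the edges of $M_i$ and in which two of them are adjacent when they are joined by an edge of $G$.

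The main obstacle is that $H_i$ need not be $3$-colourable for an arbitrarily chosen base edge-colouring: each edge of $M_i$ meets at most two further edges through each of its two endpoints, so $H_i$ has maximum degree at most $4$, and one can in fact realise a $K_4$ inside some $H_i$. Thus the heart of the proof must be to avoid this, either by selecting the initial $3$-edge-colouring so that every $H_i$ becomes subcubic and $K_4$-free — whence $3$-colourable by Brooks' theorem — or by retaining the induction and repairing a failed extension through Kempe-type recolouring along the bichromatic chains of $M_j\cup M_k$ to free a colour at $uv$. I expect essentially all of the real work, and the only genuinely delicate case analysis, to lie in resolving these dense, locally $3$-regular configurations; the low-degree reductions and the cross-matching bookkeeping are routine by comparison.
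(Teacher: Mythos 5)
There is a genuine gap, and you have in fact located it yourself: everything hinges on 3-coloring each conflict graph $H_i$, you correctly note that $H_i$ can have maximum degree $4$ and can contain a $K_4$, and you then offer only two unproven programs for getting around this (choosing a good base $3$-edge-coloring, or Kempe-type repairs). What you have is a framework, not a proof. The Kempe route is specifically suspect: swapping along a bichromatic chain of $M_j\cup M_k$ preserves properness but migrates edges between matchings, which rearranges all the conflict graphs $H_i$ globally; since strong conflicts are distance-2 constraints, a swap can create new same-class visible pairs far away from the edge you are trying to color, so there is no analogue of the classical ``a Kempe swap frees a color'' argument for strong colorings. The low-degree reduction is also incomplete as stated: for an edge with endpoint degrees $2$ and $3$ your count gives up to $1+2+2+4=9$ forbidden colors, not strictly fewer than $9$, so such edges are not greedily reducible and a minimal counterexample need not be ``essentially $3$-regular.''

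For comparison, note that the paper does not reprove this statement (it is quoted from Steger and Yu), but it describes and reuses their scheme for its own Theorem~4, and the decisive idea there is precisely \emph{not} to require the three classes to be matchings. The matrix representation partitions $E(G)$ into columns that are proper only on the $B$ side (each $b\in B$ has at most one edge per column, Observation~1), while on the $A$ side the partition is chosen, by maximizing lexicographically the numbers of Type~1 and Type~2 vertices, so as to \emph{cluster} the edges at a vertex of $A$ into a common column: the three mutually visible edges of a Type~1 vertex simply receive the three sub-colors, and the maximality of the matrix yields the structural observations that make the greedy steps succeed, the leftover lonely edges forming components with at most one alternate cycle. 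Your insistence on starting from a K\"onig proper $3$-edge-coloring is exactly what manufactures the degree-$4$ and $K_4$ obstructions in $H_i$; the relaxed, maximality-driven partition is the missing idea. It is telling that even with that machinery the present paper allows itself a fourth sub-color $\ast$ (hence $4\Delta$, i.e.\ $12$ colors at $\Delta=3$) to avoid a delicate case analysis: the analysis you defer to ``the heart of the proof'' is the actual content of Steger and Yu's $9$-color bound, and your proposal does not supply it.
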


Later on, Nakprasit solved the case where one part of the bipartition is of small maximum degree, namely at most 2.

\begin{theorem}[\label{thm:2Delta}Nakprasit~\cite{N08}]
For every $(2, \Delta)$-bipartite graph $G$, we have $\chi'_s(G)\leq 2\Delta$.
\end{theorem}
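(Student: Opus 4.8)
The plan is to recast the statement as an \emph{incidence coloring} problem on an auxiliary multigraph, in the spirit of the connection observed by Brualdi and Quinn Massey. First I would dispose of the low-degree vertices of $A$: a vertex $a\in A$ of degree at most $1$ can be deleted, its single incident edge $ab$ being reinserted and colored last. Indeed $ab$ is strongly adjacent to at most $2(\Delta-1)=2\Delta-2$ other edges (at most $\Delta-1$ sharing the endpoint $b$, and at most $\Delta-1$ across the ``length-$3$ path'' configurations $a-b-a'-b'$ leaving $b$), so a free color always remains among the $2\Delta$ available. Hence I may assume every vertex of $A$ has degree exactly $2$.

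Next I would build the multigraph $H$ with $V(H)=B$ in which every vertex $a\in A$, with neighbors $b,b'$, becomes an edge $e_a=bb'$. Since each edge of $H$ at $b$ comes from an $A$-vertex adjacent to $b$, we have $d_H(b)=d_G(b)$, so $\Delta(H)\le\Delta$. The map $ab\mapsto(b,e_a)$ is a bijection from $E(G)$ to the set of incidences of $H$, and a short case check shows that two edges $a_1b_1,a_2b_2$ of $G$ are strongly adjacent exactly when the incidences $(b_1,e_{a_1}),(b_2,e_{a_2})$ are adjacent in the usual incidence sense: a common $B$-endpoint $b_1=b_2$ corresponds to a shared incidence-vertex, a common $A$-vertex $a_1=a_2$ to a shared edge $e_{a_1}=e_{a_2}$, and a joining edge $a_1b_2$ or $a_2b_1$ in $G$ to the edge linking the two incidence-vertices being one of $e_{a_1},e_{a_2}$. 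Thus $\chi'_s(G)=\chi_i(H)$, and the theorem reduces to the clean claim that $\chi_i(H)\le 2\Delta(H)$ for every multigraph $H$.

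For the incidence bound I would fix an orientation of $H$ and split the $2\Delta$ colors into palettes $P=\{1,\dots,\Delta\}$ and $Q=\{\Delta+1,\dots,2\Delta\}$, coloring the incidence at the tail of each edge from $P$ and the one at the head from $Q$, using as base colors a proper edge coloring of $H$. The key observation is that any two conflicting incidences are either the two incidences of a single edge — separated automatically by the palette split — or involve two edges that share a vertex, hence receive distinct base colors; this already yields $2\Delta$ colors for every $H$ whose edges properly $\Delta$-color, and in particular matches the tight examples (parallel edges between two vertices). The hard part will be the multigraphs of class two, for instance those of large edge multiplicity, where a proper edge coloring needs more than $\Delta$ colors and the two $\Delta$-palettes no longer suffice in isolation. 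There one must interleave tail- and head-incidences across the palette boundary, exploiting that conflicting incidences of distinct edges always pass through a common vertex in order to recycle colors. Controlling this coupling — equivalently, the simultaneous interaction of the same-edge conflicts with the distance-two (length-$3$ path) conflicts — is the crux, and is where a more careful, likely inductive, assignment on $H$ will be needed.
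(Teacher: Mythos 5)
Your reduction is correct as far as it goes, and it is worth noting that it is \emph{exactly} an equivalence: since every multigraph $H$ with $\Delta(H)\leq\Delta$ arises as your auxiliary graph (take the $(2,\Delta)$-bipartite incidence graph of $H$, with $B=V(H)$ and one degree-$2$ vertex of $A$ per edge of $H$), the ``clean claim'' that $\chi_i(H)\leq 2\Delta(H)$ for all multigraphs is not a stepping stone but a verbatim restatement of Nakprasit's theorem in incidence-coloring language. This is precisely the Brualdi--Quinn Massey connection the paper alludes to. Your preprocessing of degree-$\leq 1$ vertices of $A$ is fine (the count of at most $2\Delta-2$ conflicting edges is correct), and the dictionary between strong adjacency in $G$ and incidence adjacency in $H$ checks out in all three cases. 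The problem is what comes after.

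The palette-splitting argument proves only $\chi_i(H)\leq 2\chi'(H)$, which equals $2\Delta$ exactly when $H$ is class one. But the multigraphs produced by the reduction are unrestricted, and by Shannon's theorem $\chi'(H)$ can be as large as $\lfloor 3\Delta/2\rfloor$: for the Shannon triangle (three vertices, each pair joined by $\Delta/2$ parallel edges) your method yields $3\Delta$ colors where the theorem promises $2\Delta$. You acknowledge this yourself, but the deferred ``crux'' --- interleaving tail- and head-incidences across the palette boundary for class-two multigraphs --- is the entire content of the theorem, not a technical loose end; no inductive scheme is sketched, so the proposal is a reformulation plus a proof of the easy subcase, not a proof. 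For comparison: the paper does not prove this statement either (it cites Nakprasit), and the known proof follows the Steger--Yu matrix scheme that the paper generalizes --- represent $G$ by an $(n_B\times\Delta)$-matrix, choose it to maximize the number of $A$-vertices whose two edges lie in a common column, and color with pairs $(i,j)$, $i\in\{1,2\}$, $j$ the column index. That route never touches edge-colorability of multigraphs and thus sidesteps the class-two obstruction on which your approach is currently stuck; if you want to salvage the incidence viewpoint, you would need a bound of the form $\chi_i(H)\leq 2\Delta(H)$ proved directly (e.g.\ by induction on edges with a careful recoloring step), not via a proper edge coloring of $H$.
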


Theorems~\ref{thm:subcubic} and~\ref{thm:2Delta} were proved using a similar proof scheme, first used in~\cite{SY93}. Reusing this idea, we prove the following which, together with the aforementioned previous results, settles a special case of Conjecture~\ref{conj:faudree_bipartite}.

\begin{theorem}
\label{thm:main}
For every $(3, \Delta)$-bipartite graph $G$, we have $\chi'_s(G)\leq 4\Delta$.
\end{theorem}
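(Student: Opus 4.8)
The plan is to reformulate the target quantitatively and then reduce it, via a decomposition of the small side, to the already established $(2,\Delta)$ case (Theorem~\ref{thm:2Delta}). Writing $a\in A$ for the degree-$\le 3$ side and $b\in B$ for the degree-$\le\Delta$ side, an edge $ab$ conflicts with at most $2$ edges at $a$, at most $\Delta-1$ edges at $b$, at most $2(\Delta-1)$ edges through the other neighbours of $a$, and at most $2(\Delta-1)$ edges through the other neighbours of $b$, so at most $5\Delta-3$ edges in total, giving a naive greedy bound of $5\Delta-2$. The entire point is to save about $\Delta$ colours, and the only available leverage is that the two families of distance-$2$ conflicts — those routed through an $A$-vertex and those routed through a $B$-vertex — behave very differently, since $\Delta(A)\le 3$ while $\Delta(B)$ may be arbitrarily large.

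Because $4\Delta=2\Delta+2\Delta$ and the $(2,\Delta)$ bound is exactly $2\Delta$, the natural route is to split $G$ into two edge-disjoint spanning subgraphs $G_1,G_2$: at each $a\in A$ I would place (up to) two of its incident edges in $G_1$ and the remaining one in $G_2$, so that $G_1$ is $(2,\Delta)$-bipartite and $G_2$ is $(1,\Delta)$-bipartite. Taking disjoint palettes $P_1,P_2$ of $2\Delta$ colours each, I would strong-colour $G_1$ inside $P_1$ using Theorem~\ref{thm:2Delta}. The subgraph $G_2$ has $A$-degree at most $1$, so two of its edges conflict only when they share their $B$-endpoint; its conflict graph is a disjoint union of cliques of size at most $\Delta$, whence a strong colouring of $G_2$ in isolation needs only $\Delta$ colours. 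This leaves a slack of $\Delta$ unused colours inside $P_2$, which is the room I intend to spend on the conflicts that the decomposition hides.

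The main obstacle, and where essentially all the work lies, is precisely these \emph{hidden conflicts}. Since the palettes are disjoint, any conflict between an $E_1$-edge and an $E_2$-edge is resolved for free; the real danger is a pair of edges lying in the \emph{same} part that conflict in $G$ through a middle edge lying in the \emph{other} part, as such a pair is invisible to the colouring of its own part. A short analysis shows these cannot be legislated away by a clever static split — designating one ``special'' edge per $A$-vertex for $G_2$ almost always makes that edge the middle of an $E_1$-conflict — so a purely black-box use of Theorem~\ref{thm:2Delta} does not close the argument. My plan is therefore to process the vertices of $B$ in a carefully chosen order and to repair the hidden $E_1$-conflicts by locally recolouring the offending edges with the $\Delta$ spare colours of $P_2$, reusing the Steger--Yu scheme to show that at each step the number of genuinely blocked colours stays within budget. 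Making this bookkeeping tight — in particular controlling the conflicts routed through a high-degree vertex $b\in B$, which are exactly the ones responsible for the gap between the naive $5\Delta$ and the desired $4\Delta$ — is the crux of the proof.
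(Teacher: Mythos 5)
Your proposal is a plan rather than a proof, and the step it defers is precisely the one that carries all the difficulty. The decomposition into a $(2,\Delta)$-part $G_1$ and a $(1,\Delta)$-part $G_2$ with disjoint palettes of size $2\Delta$ each only handles conflicts visible \emph{inside} each part; you correctly identify the hidden conflicts (two same-part edges joined by a middle edge of the other part) as the crux, but then merely announce that they will be ``repaired'' by local recolouring with the $\Delta$ spare colours of $P_2$, with no argument that this budget suffices. Naive counting says it does not: a $G_2$-edge $ab$ can see, besides the at most $\Delta-1$ other $G_2$-edges at $b$, up to $\Delta$ further $G_2$-edges through each of the two $G_1$-neighbours of $a$ (every neighbour of such a vertex may have designated its special edge there), plus one more through each of the up to $\Delta-1$ $G_1$-edges at $b$ --- on the order of $3\Delta$ conflicts against a residual palette of $2\Delta$, so even colouring $G_2$ alone in $P_2$ is not greedy-feasible without structural input you have not supplied. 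The repair step is worse: moving an offending $E_1$-edge into $P_2$ creates fresh potential conflicts with $P_2$-coloured edges (the $E_2$-edges and previously repaired edges alike), and nothing in the proposal bounds the resulting cascade; ``process $B$ in a carefully chosen order'' names the desired conclusion, not a mechanism. As it stands the argument establishes only the trivial greedy bound you compute at the outset.

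For comparison, the paper does not use Theorem~\ref{thm:2Delta} as a black box at all. It adapts the Steger--Yu scheme directly: $G$ is encoded by an $(n_B\times\Delta)$-matrix whose rows list the edges at each $b\in B$, a \emph{maximum} such matrix is chosen (lexicographically maximizing the numbers of Type~1, then Type~2, vertices of $A$), and each edge receives a colour $(i,j)$ where $j$ is its column and $i\in\{1,2,3,\ast\}$ --- so the $4\Delta$ colours are organized as $\Delta$ blocks of four indexed by columns, an orthogonal split to your $2\Delta+2\Delta$. The maximality of the matrix yields exactly the local exclusions (Observations~\ref{obs:vertexB_distinct_colors}--\ref{obs:a_branch}) that cap, at every step of a four-stage greedy procedure, the number of already-coloured same-column edges visible from the current edge; the delicate last stage (lonely edges) rests on Lemma~\ref{lemma:component_cycles}, which shows each column's component of lonely edges contains at most one cycle, necessarily alternate, so a BFS colouring with the reserve colour $\ast$ closes each component. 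The global extremal choice of the matrix is what replaces your unbounded repair process: if you pursue your route, the ordering and bookkeeping you would need to invent amounts to rediscovering this column structure.
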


\section{Proof of Theorem~\ref{thm:main}} \label{section:main}

Let $G$ be a $(3, \Delta)$-bipartite graph with bipartition $A$ and $B$ such that $\Delta(A)\leq 3$. We set $n_B=|V(B)|$. It is sufficient to prove the result for the case where all vertices of $A$ are of degree exactly 3, so let us make this assumption.

We describe $G$ by a (non-unique) $(n_B\times \Delta)$-matrix constructed in the following way:

\begin{itemize}
	\item the rows are indexed by the vertices of $B$ and the columns are indexed by $1, 2, \ldots , \Delta$; 
	
	\item every row with index $b\in B$ contains exactly once every edge incident to $b$ (some cells will be empty if $b$ is of degree strictly less than $\Delta$).
\end{itemize}

\noindent We give an example of a bipartite graph and two such associated matrices in Figure~\ref{fig:ex_bip_matrix}. Note that the order of the edges (and the empty cells, if any) in any row of a matrix can be arbitrary, and we will explain later how to take advantage of it. Assuming an edge $e$ of $G$ lies in cell $(i, j)$ of a matrix, we often refer to the index $j$ as the ‘‘column of $e$'' (with respect to this matrix). 

\begin{figure}[!t]
\begin{center}
\begin{tikzpicture}[inner sep=0.5mm]
  \draw[-] (0.5,3.2) ellipse (2.6 and 1);
  \node at (-1.5,4.2) {$B$};
  \node at (-1.5,-0.7) {$A$};
  \draw[-] (0.5,0) ellipse (1.9 and 1);
\draw (0,0) -- (-0.6,3);
\draw (0,0) -- (0.6,3);
\draw (0,0) -- (2.8,3);
\draw (1,0) -- (0.6,3);
\draw (1,0) -- (1.7,3);
\draw (1,0) -- (-0.6,3);
\draw (2,0) -- (0.6,3);
\draw (2,0) -- (1.7,3);
\draw (2,0) -- (2.8,3);
\draw (-1,0) -- (0.6,3);
\draw (-1,0) -- (-1.7,3);
\draw (-1,0) -- (-0.6,3);

\foreach \x / \i in {-1.7/1,-0.6/2,0.6/3,1.7/4,2.8/5}{
  \node[fill,circle] at (\x,3) {};
  \node at (\x,3.4) {$b_{\i}$};
}
\foreach \x / \i in {-1/1,0/2,1/3,2/4}{
  \node[fill,circle] at (\x,0) {};
  \node at (\x,-0.4) {$a_{\i}$};
}
\node at (8,3.5) {
$\bordermatrix{
& 1 & 2 & 3 & 4 \cr
b_1 & b_1a_1 & \blank & \blank & \blank  \cr
b_2 & b_2a_1 & b_2a_2 & \blank & b_2a_3  \cr
b_3 & b_3a_1 & b_3a_2 & b_3a_3 & b_3a_4 \cr
b_4 & \blank & b_4a_4 & \blank & b_4a_3 \cr
b_5 & b_5a_2 & \blank & b_5a_4 & \blank  
}$
};

\node at (8,0) {
$\bordermatrix{
& 1 & 2 & 3 & 4 \cr
b_1 & b_1a_1 & \blank & \blank & \blank  \cr
b_2 & b_2a_1 & b_2a_2 & \blank & b_2a_3  \cr
b_3 & b_3a_1 & b_3a_2 & b_3a_4 & b_3a_3 \cr
b_4 & \blank & b_4a_4 & \blank & b_4a_3 \cr
b_5 & \blank & b_5a_2 & b_5a_4 & \blank  
}$
};
\end{tikzpicture}
\caption{Example of $(3, \Delta=4)$-bipartite graph and two associated matrices. A ‘‘$\blank$'' indicates that the content of the cell is empty.}
\label{fig:ex_bip_matrix}
\end{center}
\end{figure}

Every matrix describing $G$ yields a classification of the vertices of $A$ into three types:

\medskip

 \textbf{Type~1:} vertices whose all incident edges are in the same column,

\medskip

\textbf{Type~2:} vertices whose only two incident edges are in the same column,

\medskip
 
\textbf{Type~3:} vertices whose all incident edges are in different columns.

\medskip

\noindent Since every Type~1 vertex $v$ has all of its three incident edges of the same column, say $i$, calling $i$ the ‘‘column of $v$'' directly makes sense. When considering a Type~2 vertex, we say that its two incident edges located in the same column are \textit{paired}. Its third incident edge is called \textit{lonely}.

\medskip

Since the order of the edges and the empty cells in a given row is arbitrary, different matrices can describe $G$. However, some of them will be better for us, so let us define an order on the matrices and, from now on, consider a \emph{maximum matrix} $M_G$ of $G$. The order is defined as the lexicographical order on $(T_1, T_2)$, where $T_i$ ($i=1,2$) is the number of Type $i$ vertices.
As an illustration of this order, note that, with the first (top) matrix of Figure~\ref{fig:ex_bip_matrix}, only $a_1$ is Type~1, the vertices $a_2$ and $a_3$ are Type~2, while only $a_4$ is Type~3. But this matrix is not maximum in our order as the second (bottom) matrix of Figure~\ref{fig:ex_bip_matrix} describes the same graph but yields three Type~1 vertices ($a_1$, $a_2$ and $a_3$), one Type~2 vertex ($a_4$), and no Type~3 vertex. Thus this second matrix is actually greater in the order which we defined (also note that this matrix is not maximum neither as several other permutations of entries are possible in order to obtain more Type~1 vertices).

Now we give some observations on $M_G$ which will be useful for the coloring process. Most of these observations are straightforward and can be proved by just showing that if some particular situation occurs, then we can perform switches (\emph{i.e.} exchange two edges in a same row) in $M_G$ to get a matrix contradicting the maximality of $M_G$. We provide the proof of Observation~\ref{obs:2Type3_common_neighbour} as an illustration of this statement.

\begin{observation}
\label{obs:vertexB_distinct_colors}
For every $i\in \{1, \ldots, \Delta\}$, every vertex of $B$ has at most one incident edge in column $i$.
\end{observation}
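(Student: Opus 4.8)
The plan is to read the statement directly off the construction of $M_G$, without invoking its maximality at all. By definition $M_G$ is an $(n_B \times \Delta)$-matrix, so for each vertex $b \in B$ (indexing a row) and each index $i \in \{1, \ldots, \Delta\}$ (indexing a column) there is exactly one cell $(b, i)$. The construction stipulates that the row indexed by $b$ lists every edge incident to $b$ exactly once, with one entry per cell; in particular each cell holds at most one edge. Recalling the convention fixed just above the observation, namely that an edge lying in cell $(b, i)$ has column $i$, the edges incident to $b$ that lie in column $i$ are precisely the content of the single cell $(b, i)$. Since that cell is either empty or occupied by exactly one edge, $b$ has at most one incident edge in column $i$, which is what we want.

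I do not expect any genuine obstacle here. Unlike the remaining observations, whose proofs rest on performing switches in $M_G$ and deriving a contradiction with its maximality, this first one is an immediate consequence of the trivial fact that a matrix has a unique entry per row--column pair, and it holds for \emph{every} matrix describing $G$, not only for the maximum one. The only point that requires a little care is purely notational: one must apply consistently the paper's convention that the ``column of an edge $e$'' is the column index of the cell containing $e$, so that ``$b$ has an incident edge in column $i$'' is correctly identified with ``the cell $(b, i)$ is non-empty''. Once this identification is made explicit, the argument reduces to a one-line verification, and it is reasonable to record it as an observation rather than as a separately argued lemma.
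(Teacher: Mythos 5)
Your proof is correct and matches the paper's treatment: the paper offers no argument for this observation beyond its being straightforward, precisely because, as you note, it follows immediately from the matrix construction (one cell per row--column pair, each edge incident to $b$ listed exactly once in row $b$) and holds for every matrix describing $G$, with no appeal to maximality of $M_G$. Your remark that this observation, unlike the others, needs no switching argument is accurate and worth keeping.
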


Let $e$ and $e'$ be two edges of $G$. We say that $e$ is \emph{visible} from $e'$ (or $e'$ \textit{sees} $e$) if $e$ and $e'$ are adjacent or share a common adjacent edge. So equivalently a strong edge-coloring is an assignment of colors such that every two edges which are mutually visible are assigned different colors.

\begin{observation}
\label{obs:2vertices_common_neighbour}
If two vertices $a_0$ and $a_1$ of $A$ have no common neighbor in $B$, then every edge incident to $a_0$ sees no edge incident to $a_1$.
\end{observation}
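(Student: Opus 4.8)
The plan is to argue at the level of individual edges and show that whenever an edge of $a_0$ would see an edge of $a_1$, we can exhibit a common neighbor, contradicting the hypothesis. So fix an arbitrary edge $e_0 = a_0b_0$ incident to $a_0$ and an arbitrary edge $e_1 = a_1b_1$ incident to $a_1$, where $b_0, b_1 \in B$. By the definition of visibility just introduced, $e_0$ sees $e_1$ in exactly one of two ways: either $e_0$ and $e_1$ are adjacent, or they share a common adjacent edge. I would rule out both, and since $e_0$ and $e_1$ were arbitrary, the observation follows.

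For adjacency, I would use bipartiteness: since $a_0, a_1 \in A$ and $b_0, b_1 \in B$, the only possible shared endpoint of $e_0$ and $e_1$ is $b_0 = b_1$ (the $A$-endpoints are distinct, and a vertex of $A$ cannot coincide with a vertex of $B$). But $b_0 = b_1$ is then a common neighbor of $a_0$ and $a_1$, which is excluded.

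For a common adjacent edge, I would again invoke bipartiteness to write any candidate edge as $f = xy$ with $x \in A$ and $y \in B$. Adjacency of $f$ to $e_0$ forces $x = a_0$ or $y = b_0$, and adjacency to $e_1$ forces $x = a_1$ or $y = b_1$. Running through the four combinations: $x = a_0 = a_1$ is impossible as the two vertices are distinct; $x = a_0$ together with $y = b_1$ makes $f = a_0b_1$, so $b_1$ is a common neighbor; $y = b_0$ together with $x = a_1$ makes $f = a_1b_0$, so $b_0$ is a common neighbor; and $y = b_0 = b_1$ is again a common neighbor. Thus every case that is not outright impossible yields a common neighbor of $a_0$ and $a_1$, a contradiction, so no such $f$ exists.

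I do not anticipate any genuine obstacle: the statement is essentially structural. The only point requiring care is to lean on bipartiteness consistently, namely that every edge has precisely one endpoint in $A$ and one in $B$. This is exactly what keeps the case analysis finite and what forces each shared vertex to lie in $B$, hence to be a common neighbor of $a_0$ and $a_1$.
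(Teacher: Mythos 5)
Your proof is correct, and it matches what the paper intends: the authors leave Observation~\ref{obs:2vertices_common_neighbour} unproved as ``straightforward,'' and your elementary case analysis (shared endpoints and common adjacent edges must, by bipartiteness, produce a vertex of $B$ adjacent to both $a_0$ and $a_1$) is exactly the routine structural argument they have in mind. Note, correctly reflected in your write-up, that this observation needs only bipartiteness and not the maximality of $M_G$ or any switching argument, unlike the later observations.
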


\begin{observation}
\label{obs:2Type3_common_neighbour}
Let $a_0$ and $a_1$ be two Type~3 vertices with incident edges $a_0b_0,a_0b_1,a_0b_2$ and $a_1b_3,a_1b_4,a_1b_2$, respectively. Note that $b_2$ is a common neighbor of $a_0$ and $a_1$. Let $i,j,k$ (respectively $i',j',k'$) be the columns of $a_0b_0,a_0b_1,a_0b_2$ (respectively $a_1b_3,a_1b_4,a_1b_2$). Then $k\notin\{i',j',k'\}$ and $k'\notin\{i,j,k\}$.
\end{observation}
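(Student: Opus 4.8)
The plan is to prove the statement by contradiction, using the maximality of $M_G$ with respect to the lexicographical order on $(T_1, T_2)$. Suppose, for instance, that $k \in \{i', j', k'\}$, meaning the column $k$ of the shared edge $a_0b_2$ coincides with one of the columns used by $a_1$. I would like to perform a switch in the row indexed by $b_2$ that moves the edge $a_0b_2$ out of column $k$ into some other column, thereby turning the Type~3 vertex $a_0$ into a Type~2 vertex (creating a pairing) or otherwise increasing $T_1$, contradicting maximality. So the first step is to examine which column $a_0b_2$ can be switched into.

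Concretely, recall that $a_0$ is Type~3, so its three edges $a_0b_0, a_0b_1, a_0b_2$ occupy three distinct columns $i, j, k$. The key observation is that if I can move $a_0b_2$ into column $i$ or column $j$ (within row $b_2$), then $a_0$ would acquire two edges in a common column and thus become Type~2, strictly increasing the count $T_2$ while leaving $T_1$ unchanged --- a contradiction to maximality, provided the switch is legal and does not decrease $T_1$ or create illegal configurations elsewhere. A switch in row $b_2$ that places $a_0b_2$ in column $i$ requires column $i$ of row $b_2$ to either be empty or to hold an edge $a_0b_2'$ that can be displaced; by Observation~\ref{obs:vertexB_distinct_colors}, $b_2$ has at most one edge per column, so the switch is well-defined. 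The next step is to verify that such a switch is always available: since $k \in \{i',j',k'\}$, the hypothesis gives structural room to relocate $a_0b_2$ while preserving the type counts of all other vertices.

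The main obstacle I anticipate is bookkeeping: a switch in row $b_2$ affects not only $a_0$ but also whatever vertex's edge previously occupied the target column in that row, so I must ensure the switch does not inadvertently demote another Type~1 or Type~2 vertex and thereby fail to produce a net increase in the order. I would handle this by choosing the target column carefully --- moving $a_0b_2$ into column $i$ (a column already used by $a_0$) and simultaneously tracking the at most one edge displaced from cell $(b_2, i)$. Since we only care about the lexicographic increase of $(T_1, T_2)$, I would argue that the gain from making $a_0$ Type~2 (or Type~1) dominates any loss, which typically forces the displaced edge to belong to $a_0$ itself or to a vertex whose type is unaffected.

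The symmetric claim $k' \notin \{i, j, k\}$ follows by swapping the roles of $a_0$ and $a_1$, so I would prove one direction in full and invoke symmetry for the other. In summary: assume the forbidden column coincidence, perform a single legal switch in the shared row $b_2$ guided by Observation~\ref{obs:vertexB_distinct_colors}, check that it strictly increases $(T_1, T_2)$ in the lexicographic order, and conclude a contradiction with the maximality of $M_G$.
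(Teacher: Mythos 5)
Your overall strategy (assume the column coincidence, perform one switch in row $b_2$, and contradict the maximality of $M_G$) is the right one, but the switch you choose does not work, and the gap is exactly where you wave your hands. You propose to move $a_0b_2$ into column $i$ or $j$, i.e.\ to swap it with whatever edge occupies cell $(b_2,i)$. Call that displaced edge $b_2a'$; after your switch it lands in column $k$, and there is no reason the type of $a'$ survives. If $a'$ is a Type~1 vertex of column $i$, your switch changes $(T_1,T_2)$ to $(T_1-1,\,T_2+2)$, a lexicographic \emph{decrease}; if $a'$ is Type~2 with $b_2a'$ one of its paired edges in column $i$, the switch can demote $a'$ to Type~3 while promoting $a_0$ to Type~2, leaving $(T_1,T_2)$ unchanged. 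In neither case do you contradict maximality, so your claim that ``the gain from making $a_0$ Type~2 dominates any loss'' is false precisely in the core cases. Worse, your switch makes no genuine use of the hypothesis $k\in\{i',j',k'\}$: if the argument worked as stated, it would show that a maximum matrix can contain no Type~3 vertex at all (any Type~3 vertex could be ``improved'' this way), which is wrong --- the paper's Step~3 exists precisely to color the edges incident to Type~3 vertices of a maximum matrix.

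The switch that actually works, and the one the paper uses, is the swap of the two edges $b_2a_0$ and $b_2a_1$ within row $b_2$. First note $k\neq k'$ is automatic, since these two edges occupy distinct cells of row $b_2$; so without loss of generality $k=i'$ (the case $k=j'$ is symmetric, and $k'\in\{i,j,k\}$ follows by exchanging the roles of $a_0$ and $a_1$, as you said). After the swap, $b_2a_1$ sits in column $k=i'$ and is paired with $a_1b_3$, so $a_1$ becomes Type~2, while $b_2a_0$ moves to column $k'$ and $a_0$ remains Type~3 (or even becomes Type~2 if $k'\in\{i,j\}$). The decisive point your proposal misses is that this swap touches only the two cells holding $b_2a_0$ and $b_2a_1$, both of whose $A$-endpoints are the vertices under consideration: no third vertex of $A$ is affected, so $T_1$ is unchanged and $T_2$ strictly increases, contradicting the maximality of $M_G$. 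This is exactly where the hypothesis $k=i'$ enters: it guarantees that the swap \emph{creates} a pairing at $a_1$ without displacing any edge belonging to another vertex --- the bookkeeping problem you correctly anticipated but did not resolve simply never arises.
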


\begin{proof}
Assume by contradiction that one of the situations described in the statement occurs, \emph{e.g.} that without loss of generality we have $k=i'$. Then $M_G$ looks like the first (left) matrix depicted in Figure~\ref{fig:ex_proof}. But then, by switching $b_2a_0$ and $b_2a_1$ in the row indexed by $b_2$, we get the second (right) matrix $M_G'$ depicted in Figure~\ref{fig:ex_proof} which yields the same number of Type~$1$ vertices, but one extra Type~$2$ vertex $a_1$. Therefore, $M_G$ is not maximum -- a contradiction.
\end{proof}

\begin{figure}[!t]
\begin{center}
\begin{tikzpicture}[inner sep=0.5mm]
\node at (0,0) {
$\bordermatrix{
& i & j & k=i' & j' & k' \cr
b_0 & b_0a_0 & - & - & - & -  \cr
b_1 & - & b_1a_0 & - & - & -  \cr
b_2 & - & - & b_2a_0 & - & b_2a_1 \cr
b_3 & - & - & b_3a_1 & - & - \cr
b_4 & - & - & - & b_4a_1 & -
}$
};

\node at (0.35, -1.75) {
$M_G$
};

\node at (8,0) {
$\bordermatrix{
& i & j & k=i' & j' & k' \cr
b_0 & b_0a_0 & - & - & - & -  \cr
b_1 & - & b_1a_0 & - & - & -  \cr
b_2 & - & - & \mathbf{b_2a_1} & - & \mathbf{b_2a_0} \cr
b_3 & - & - & b_3a_1 & - & - \cr
b_4 & - & - & - & b_4a_1 & -
}$
};

\node at (8.35, -1.75) {
$M_G'$
};

\end{tikzpicture}
\caption{An illustration of the proof of Observation~\ref{obs:2Type3_common_neighbour}. The first (left) matrix $M_G$ cannot be maximum, since one can switch two edges on a same row to get the second (right) matrix $M_G'$, which has more Type 2 vertices and the same number of Type 1 vertices. A ‘‘$-$'' indicates that the content of the cell can be arbitrary (either filled or empty).}
\label{fig:ex_proof}
\end{center}
\end{figure}

\begin{observation}
\label{obs:Type3_adjacent_Type2}
Let $a_0$ be a Type~3 vertex with incident edges $e_1,e_2,e_3$ in columns $i,j,k$, respectively. Let $a_1$ be a Type~2 vertex with incident edges $e_4,e_5,e_6$ in columns $i',i',j'$, respectively. If $e_6$ is adjacent to $e_3$, then $j'\notin \{i,j,k\}$ and $i'\neq k$.
\end{observation}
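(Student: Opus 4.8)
The plan is to reuse the switching/maximality argument from the proof of Observation~\ref{obs:2Type3_common_neighbour}. First I would locate the shared vertex: $e_6$ is incident to $a_1$ and $e_3$ to $a_0$ with $a_0\neq a_1$, so two such adjacent edges must meet in $B$. Hence there is a common neighbour $b_2\in B$ with $e_3=a_0b_2$ sitting in column $k$ and $e_6=a_1b_2$ sitting in column $j'$. Applying Observation~\ref{obs:vertexB_distinct_colors} to $b_2$ gives $k\neq j'$ for free, so it remains only to exclude $j'\in\{i,j\}$ and $i'=k$, which I would do simultaneously.

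For this I would exchange $e_3$ and $e_6$ inside the row indexed by $b_2$, obtaining a matrix $M_G'$. Swapping two cells of one row keeps the set of columns used by $b_2$ unchanged, so $M_G'$ is again a valid matrix describing $G$; and since each edge is incident to a unique vertex of $A$, the only vertices whose type can change are $a_0$ and $a_1$. In $M_G'$ the edges of $a_0$ occupy columns $\{i,j,j'\}$ and those of $a_1$ occupy columns $\{i',i',k\}$. As $i\neq j$ (because $a_0$ is Type~3), $a_0$ cannot become Type~1; it stays Type~3 when $j'\notin\{i,j\}$ and becomes Type~2 when $j'\in\{i,j\}$. Thus $a_0$ can only move up, never down, in the order on $(T_1,T_2)$; likewise $a_1$ either stays Type~2 (when $k\neq i'$) or becomes Type~1 (when $k=i'$).

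The contradiction is then immediate from a two-line case split. If $i'=k$, all three edges of $a_1$ land in column $i'$, so $a_1$ becomes Type~1 and $T_1$ strictly increases, contradicting the maximality of $M_G$ no matter what happens to $a_0$. Otherwise $i'\neq k$ and $j'\in\{i,j\}$; then $a_1$ stays Type~2 while $a_0$ turns from Type~3 into Type~2, so $T_1$ is unchanged and $T_2$ strictly increases, again contradicting maximality. Since each forbidden configuration produces a matrix strictly above $M_G$ in the lexicographic order, we conclude $j'\notin\{i,j,k\}$ and $i'\neq k$.

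The part I would treat most carefully is precisely this bookkeeping of $(T_1,T_2)$ under the switch: the whole argument hinges on the fact that the single switch never demotes either vertex, so that any local gain (an extra Type~1 vertex $a_1$, or an extra Type~2 vertex $a_0$) is never cancelled elsewhere. The two facts that guarantee this — $a_0$ starts as Type~3 and therefore has nothing to lose, and $i\neq j$ so the switch cannot do anything unexpected to $a_0$ — are exactly what makes the exchange of $e_3$ and $e_6$ the right move. Everything else (that only $a_0$ and $a_1$ are affected, and that the resulting array is still a legal matrix) is routine and already implicit in the discussion preceding the observations.
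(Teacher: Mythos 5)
Your proof is correct and follows exactly the switching/maximality scheme the paper prescribes for all these observations (and illustrates on Observation~\ref{obs:2Type3_common_neighbour}): locate the common neighbour $b_2\in B$, swap $e_3$ and $e_6$ in row $b_2$, and check that each forbidden configuration yields a matrix strictly greater in the lexicographic order on $(T_1,T_2)$. Your bookkeeping is sound — in particular getting $k\neq j'$ for free from Observation~\ref{obs:vertexB_distinct_colors}, and noting that $a_0$ cannot become Type~1 since $i\neq j$ — so nothing is missing.
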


\begin{observation}
\label{obs:Type2_adjacent_Type2}
Let $a_0$ and $a_1$ be two Type~2 vertices. Let $e_1$, $e_2$, $e_3$ (respectively $e_4$, $e_5$, $e_6$) be their incident edges in columns $i$, $i$, $j$ (respectively $i'$, $i'$, $j'$). If $e_3$ is adjacent to $e_4$ or $e_5$, then $i\neq i'$.
\end{observation}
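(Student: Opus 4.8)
The plan is to follow the switching template already used for Observation~\ref{obs:2Type3_common_neighbour}: assume the forbidden equality $i=i'$ holds together with the stated adjacency, and then exhibit a switch in a single row of $M_G$ that strictly increases the number of Type~1 vertices, contradicting the maximality of $M_G$.

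First I would reduce to a single case. Since $e_3$ is incident to $a_0$ and $e_4,e_5$ are incident to $a_1$ with $a_0\neq a_1$, any adjacency between $e_3$ and $e_4$ (or $e_5$) must arise from a shared vertex of $B$. By the symmetry exchanging the roles of the two paired edges $e_4$ and $e_5$, I may assume $e_3$ is adjacent to $e_4$, so that $e_3=a_0b$ and $e_4=a_1b$ for a common vertex $b\in B$. Recall the configuration: $e_1,e_2$ lie in column $i$ while the lonely edge $e_3$ of $a_0$ lies in column $j$ with $j\neq i$; likewise $e_4,e_5$ lie in column $i'$ and $e_6$ in column $j'$.

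Next, assume for contradiction that $i=i'$. Then in the row of $M_G$ indexed by $b$, the cell in column $j$ contains $e_3=a_0b$ and the cell in column $i=i'$ contains $e_4=a_1b$; these two columns are distinct because $j\neq i$. I would switch these two cells, an operation that changes only the columns of $e_3$ and $e_4$ and hence can affect only the types of $a_0$ and $a_1$. After the switch, all three edges $e_1,e_2,e_3$ of $a_0$ lie in column $i$, so $a_0$ becomes Type~1; meanwhile $a_1$ at worst changes (it stays Type~2 if $j=j'$ and becomes Type~3 otherwise), but in no case becomes Type~1. Thus the new matrix has strictly more Type~1 vertices than $M_G$, contradicting its maximality, and therefore $i\neq i'$.

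The one point requiring care—and the step I expect to be the main obstacle—is verifying that the switch yields a legitimate matrix and correctly tracking the resulting type changes: one must confirm via Observation~\ref{obs:vertexB_distinct_colors} that columns $i$ and $j$ of row $b$ each still contain a single edge after the exchange, that no incident edge of any vertex other than $a_0$ and $a_1$ moves, and that the net effect is a strict gain in the first coordinate $T_1$ of the lexicographic order, so that the comparison does not depend on the more delicate behaviour of $a_1$.
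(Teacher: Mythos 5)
Your proof is correct and takes essentially the same approach as the paper: the paper leaves this observation's proof implicit, stating that all such observations follow from a row switch contradicting the maximality of $M_G$ (as illustrated in its proof of Observation~\ref{obs:2Type3_common_neighbour}), and your swap of the cells in columns $j$ and $i=i'$ of row $b$ is exactly that switch. Your bookkeeping is also right: $a_0$ becomes Type~1, $a_1$ stays Type~2 or becomes Type~3, so $T_1$ strictly increases and the lexicographic maximality is violated regardless of what happens to $T_2$.
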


\begin{observation}
\label{obs:Type2_adjacent_Type22}
Let $a_0$ and $a_1$ be two Type~2 vertices. Let $e_1$, $e_2$, $e_3$ (respectively $e_4$, $e_5$, $e_6$) be their incident edges in columns $i$, $i$, $j$ (respectively $i'$, $i'$, $j'$). If $e_3$ is adjacent to $e_6$, then $j \neq i'$.
\end{observation}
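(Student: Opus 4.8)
The plan is to reuse the switching argument illustrated for Observation~\ref{obs:2Type3_common_neighbour}: assume the forbidden equality $j=i'$ holds, exhibit a single switch in $M_G$ that strictly raises the pair $(T_1,T_2)$ in the lexicographic order, and thereby contradict the maximality of $M_G$.

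First I would pin down the shared vertex. Since $e_3$ is incident to $a_0$ and $e_6$ is incident to $a_1$ with $a_0\neq a_1$, their adjacency forces them to share an endpoint in $B$; call it $b$, so that $e_3=a_0b$ sits in row $b$, column $j$, and $e_6=a_1b$ sits in row $b$, column $j'$. By Observation~\ref{obs:vertexB_distinct_colors} these two distinct edges incident to $b$ lie in distinct columns, so $j\neq j'$ and the switch I am about to perform genuinely moves entries. Now suppose for contradiction that $j=i'$.

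The key step is to switch $e_3$ and $e_6$ within row $b$, sending $e_6=a_1b$ to column $j$ and $e_3=a_0b$ to column $j'$; call the resulting matrix $M_G'$. The paired edges $e_4,e_5$ of $a_1$ remain in column $i'=j$, and now $e_6$ also lies in column $j$, so all three edges of $a_1$ share column $j$ and $a_1$ becomes Type~1 in $M_G'$.

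Finally I would check the bookkeeping. A switch confined to row $b$ moves only edges incident to $b$, so the sole vertices of $A$ whose type can change are $a_0$ and $a_1$. The vertex $a_1$ passes from Type~2 to Type~1, while $a_0$ keeps its paired edges $e_1,e_2$ in column $i$ and merely has $e_3$ relocated to column $j'$, so $a_0$ remains Type~2 (if $j'\neq i$) or also becomes Type~1 (if $j'=i$). In either case the number $T_1$ of Type~1 vertices strictly increases, hence $(T_1,T_2)$ strictly increases lexicographically, contradicting the maximality of $M_G$. I expect no genuine obstacle: the only points that demand care are confirming $j\neq j'$ via Observation~\ref{obs:vertexB_distinct_colors} (so the switch is well defined and still respects that observation in $M_G'$) and observing that the possible secondary change of $a_0$ can never lower $T_1$.
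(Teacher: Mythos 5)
Your proof is correct and is exactly the switching argument the paper intends: the paper proves only Observation~\ref{obs:2Type3_common_neighbour} explicitly and states that the remaining observations, including this one, follow by the same template of exchanging two edges in a row of $M_G$ to contradict lexicographic maximality of $(T_1,T_2)$. Your execution — identifying the shared endpoint $b\in B$, switching $e_3$ and $e_6$ in row $b$ to make $a_1$ Type~1, and noting that the possible promotion of $a_0$ only helps — matches that intended proof, and is even slightly cleaner than the illustrated case since here $T_1$ itself strictly increases.
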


\begin{observation}
\label{obs:2Type1_adjacent}
Let $a_0$ and $a_1$ be two Type~1 vertices of columns $i$ and $j$, respectively. If $a_0$ and $a_1$ have a common neighbor, then $i\neq j$.
\end{observation}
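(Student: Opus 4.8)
The plan is to argue by contradiction, following the general scheme sketched just before Observation~\ref{obs:vertexB_distinct_colors}. Here, however, I expect no switch on $M_G$ to be needed at all: the conclusion should follow directly from Observation~\ref{obs:vertexB_distinct_colors}. So I would suppose, for contradiction, that $i=j$, that is, that both Type~1 vertices $a_0$ and $a_1$ have all of their incident edges in one and the same column $i$.

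By hypothesis $a_0$ and $a_1$ have a common neighbor; call it $b\in B$. Then $b$ is incident to the edge $ba_0$, which is one of the three edges of $a_0$ and hence lies in column $i$ (as $a_0$ is Type~1 of column $i$). Likewise $b$ is incident to $ba_1$, which is one of the three edges of $a_1$ and hence lies in column $j=i$ (as $a_1$ is Type~1 of column $j$). Thus the row of $M_G$ indexed by $b$ would contain two \emph{distinct} edges, namely $ba_0$ and $ba_1$, both located in column $i$. This is precisely the configuration forbidden by Observation~\ref{obs:vertexB_distinct_colors}, which asserts that every vertex of $B$ has at most one incident edge in any given column; the contradiction then yields $i\neq j$, as desired.

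The point I would stress is that, unlike the switching arguments used for the neighboring observations (such as Observation~\ref{obs:2Type3_common_neighbour}), the present statement does not invoke the maximality of $M_G$ at all: it is a structural consequence of the matrix encoding alone, since a column restricted to a single row is a single cell and so cannot accommodate two distinct edges. Accordingly, I do not anticipate any genuine obstacle. The only thing to check carefully is that having a common neighbor forces the two column-$i$ edges into the \emph{same} row, which is immediate from the definitions of a Type~1 vertex and of a common neighbor.
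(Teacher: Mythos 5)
Your proof is correct: since a Type~1 vertex has all three incident edges in its column, a common neighbor $b$ of $a_0$ and $a_1$ with $i=j$ would have two distinct incident edges $ba_0$ and $ba_1$ in column $i$ of its row, which is exactly what Observation~\ref{obs:vertexB_distinct_colors} forbids (and is in fact impossible in any matrix describing $G$, as a cell holds at most one edge). The paper gives no explicit proof here, classifying this among the ``straightforward'' observations, and your argument is the intended one; your further remark is also accurate -- the paper only says \emph{most} observations are proved via switches contradicting the maximality of $M_G$, and this one, like Observation~\ref{obs:vertexB_distinct_colors} itself, holds for every matrix, maximal or not.
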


\begin{observation}
\label{obs:a_branch}
Let $a$ be a Type~2 vertex with incident edges $e_1$, $e_2$, $e_3$, where $e_1$ is the lonely edge in column $j$. Then at least one of $e_2$ or $e_3$ is not adjacent to a lonely edge of column $j$ different from $e_1$.
\end{observation}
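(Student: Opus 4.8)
The plan is to reuse the \emph{switching} scheme that justifies the earlier observations: assume the conclusion fails and produce edge exchanges in $M_G$ that strictly increase $T_1$, contradicting the maximality of $M_G$. Write $e_2,e_3$ for the two paired edges of $a$, lying in a common column $i\neq j$, and $e_1=ab_1$ for the lonely edge in column $j$. Suppose, for contradiction, that \emph{both} $e_2$ and $e_3$ are adjacent to lonely edges of column $j$ distinct from $e_1$; call these $f_2$ (adjacent to $e_2$) and $f_3$ (adjacent to $e_3$).

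First I would pin down where $f_2$ and $f_3$ sit in $M_G$. Since $f_2$ is adjacent to $e_2=ab_2$, it shares an endpoint with $e_2$. It cannot share the endpoint $a$: the only edges incident to $a$ are $e_1,e_2,e_3$, of which only $e_1$ lies in column $j$, and $f_2\neq e_1$. Hence $f_2$ is incident to $b_2$, and being a column-$j$ edge it occupies cell $(b_2,j)$; likewise $f_3$ occupies cell $(b_3,j)$. Writing $a_2,a_3$ for the $A$-endpoints of $f_2,f_3$, each is a Type~2 vertex whose lonely edge is $f_2$, resp.\ $f_3$, and $a_2,a_3\neq a$.

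Next I would record the distinctness facts that make the switches legal. Observation~\ref{obs:vertexB_distinct_colors} gives $b_2\neq b_3$ (otherwise $b_2$ would carry both $e_2$ and $e_3$ in column $i$), and $b_1\neq b_2,b_3$ (otherwise some row would carry both $e_1$ and $f_2$, resp.\ $f_3$, in column $j$); so $b_1,b_2,b_3$ are pairwise distinct. The decisive step is then to switch $e_2$ with $f_2$ in row $b_2$ and $e_3$ with $f_3$ in row $b_3$. Each is a legal exchange of two filled cells of a single row, the two switches lie in distinct rows, and together they move $e_2$ and $e_3$ into column $j$. Since $e_1$ already lies in column $j$ and $b_1,b_2,b_3$ are distinct, the resulting matrix $M_G'$ is valid and makes $a$ a Type~1 vertex. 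The only other vertices whose type can change are $a_2$ and $a_3$, both of which were Type~2 (not Type~1). Hence $T_1$ strictly increases from $M_G$ to $M_G'$, contradicting the maximality of $M_G$ in the lexicographic order on $(T_1,T_2)$.

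The step I expect to require the most care is the type bookkeeping in the last paragraph: one must check that no Type~1 vertex is destroyed by the switches, which is exactly why it matters that the only affected vertices besides $a$ are $a_2$ and $a_3$ and that these were already Type~2. Because the gain is registered in the first coordinate $T_1$, the subsequent types of $a_2,a_3$ and the behavior of $T_2$ are irrelevant, which keeps the argument short.
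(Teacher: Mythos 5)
Your proof is correct and follows exactly the approach the paper intends: the paper omits an explicit proof of Observation~\ref{obs:a_branch}, stating that such observations follow from the switching template illustrated in the proof of Observation~\ref{obs:2Type3_common_neighbour}, and your argument is a faithful instantiation of that template (swap $e_2$ with $f_2$ in row $b_2$ and $e_3$ with $f_3$ in row $b_3$, making $a$ Type~1 and strictly increasing $T_1$). Your bookkeeping is also sound: the only vertices whose type can change are $a$, $a_2$, $a_3$, and since $a_2,a_3$ were Type~2, no Type~1 vertex is destroyed, contradicting maximality of $M_G$ in the lexicographic order.
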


We now describe the coloring process which will yield a strong $4\Delta$-edge-coloring $c$ of $G$. Each edge $e$ will be given a color $c(e)=(i,j)$, where $j\in\{1,\ldots,\Delta\}$ is fixed as the column of $e$ in $M_G$ and $i\in\{1,2,3,\ast\}$ is to be set in the coloring process.
So, in what follows, by ‘‘coloring an edge'' we mean assigning a value to $i$. 

The coloring process mainly consists in coloring the edges of $G$ successively without creating any conflict, \emph{i.e.} in such a way that every resulting partial edge-coloring remains strong. Its successive steps are the following:

\medskip

\noindent \textbf{\emph{Coloring Procedure:}} \\
\indent \textbf{Step~1:} color the edges incident to Type~1 vertices. \\
\indent \textbf{Step~2:} color the paired edges incident to Type~2 vertices. \\
\indent \textbf{Step~3:} color the edges incident to Type~3 vertices. \\
\indent \textbf{Step~4:} color the lonely edges incident to Type~2 vertices.

\medskip

\noindent In order to show that this coloring procedure is almost optimal somehow, we will impose ourselves the constraint that the ‘‘special'' color $\ast$ is used during Step~4 only. This will show that $3\Delta$ colors are sufficient to color all edges considered during Steps~1 to~3. 

The first three steps will be performed greedily, while the last one requires a careful analysis of the structure of the remaining non-colored edges. The rest of this section is dedicated to explanations on why this procedure can be achieved correctly, \emph{i.e.} why there is always an available color for an edge considered at any of the four steps.


\subsection*{\textbf{Step 1:} color the edges incident to Type~1 vertices.}

For each Type~$1$ vertex with incident edges $e_1$, $e_2$ and $e_3$, just color $e_1$, $e_2$ and $e_3$ greedily (\emph{i.e.} properly) with $\{1,2,3\}$. The obtained edge-coloring is also strong, which follows directly from Observations~\ref{obs:2vertices_common_neighbour} and~\ref{obs:2Type1_adjacent}.


\subsection*{\textbf{Step 2:} color the paired edges incident to Type~2 vertices.}

Once again, for each Type~2 vertex with incident paired edges $e_1$ and $e_2$, we just color $e_1$ and $e_2$ greedily, in such a way that no conflict arises with the already colored edges. The following lemma shows that this is always possible, \emph{i.e.} that, after Step~$1$ and at any moment of Step~$2$, there is always (at least) one color available for any considered paired edge.

\begin{lemma}
\label{lemma:step2}
After performing Step 1 and any number of iterations of Step~2, for each Type~2 vertex which was not considered yet, there are always at least two colors available among $\{1,2,3\}$ for each of its paired edges.
\end{lemma}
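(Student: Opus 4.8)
The plan is to fix a Type~2 vertex $a_1$ that has not yet been considered, with paired edges $e_4=a_1b$ and $e_5=a_1b'$ in a common column $i'$ and lonely edge $e_6=a_1b''$ in column $j'$, and to prove that at most one color of $\{1,2,3\}$ is forbidden for $e_4$; the claim for $e_5$ then follows by exchanging the roles of $b$ and $b'$. The decisive simplification is that the second coordinate of every color is the (fixed) column of the edge, so two edges lying in different columns can never clash. Hence the only colors that an already-colored edge can forbid for $e_4$ are those of the form $(t,i')$ carried by an edge of column $i'$ that is visible from $e_4$, and it suffices to show that all such already-colored edges use at most one value $t\in\{1,2,3\}$.

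First I would enumerate the edges of column $i'$ that are visible from $e_4=a_1b$. Those sharing an endpoint contribute nothing colored: at $a_1$ the only other column-$i'$ edge is $e_5$, which is not yet colored, while $e_6$ lies in column $j'\neq i'$; at $b$, Observation~\ref{obs:vertexB_distinct_colors} leaves only $e_4$ itself. The edges at distance two arise through a common adjacent edge $g$ incident either to $a_1$ or to $b$. Routing through $a_1$ with $g=e_5$ lands again at $b'$, where Observation~\ref{obs:vertexB_distinct_colors} permits only $e_5$; routing through $a_1$ with $g=e_6$ produces column-$i'$ edges incident to the lonely neighbor $b''$; and routing through $b$ with $g=ba^{*}$ produces column-$i'$ edges incident to an $A$-neighbor $a^{*}\neq a_1$ of $b$. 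Thus only two families of colored edges remain to control: the column-$i'$ edges at $b''$, and the column-$i'$ edges at the neighbors $a^{*}$ of $b$.

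The heart of the argument is to show that the second family forbids nothing. The key point is that $a^{*}b$ cannot lie in column $i'$, since Observation~\ref{obs:vertexB_distinct_colors} already reserves the unique column-$i'$ slot at $b$ for $e_4$. Consequently a Type~1 neighbor $a^{*}$ (all of whose edges share one column) has no column-$i'$ edge at all; a Type~3 neighbor has a single column-$i'$ edge, but it is colored only in Step~3 and so is still blank; and for a Type~2 neighbor an already-colored column-$i'$ edge would have to be one of its paired edges, forcing its paired column to equal $i'$. In that case $a^{*}b$ is its lonely edge and is adjacent to the paired edge $e_4$ of $a_1$, so Observation~\ref{obs:Type2_adjacent_Type2} would force the two paired columns to differ, contradicting that both equal $i'$ (and a possible lonely column-$i'$ edge of $a^{*}$ is deferred to Step~4). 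This eliminates the second family entirely, while the first family contains at most one edge, again by Observation~\ref{obs:vertexB_distinct_colors} applied at $b''$. Hence at most one color is forbidden and at least two remain, as required. I expect this routing-through-$b$ family to be the main obstacle: a single Type~1 or Type~2 neighbor of $b$ could a priori place two or three distinctly-colored column-$i'$ edges within sight of $e_4$, and the whole proof hinges on combining Observation~\ref{obs:vertexB_distinct_colors} at $b$ with the maximality-driven Observation~\ref{obs:Type2_adjacent_Type2} and the ordering of the four steps to rule every one of them out.
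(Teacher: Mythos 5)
Your proof is correct and follows essentially the same route as the paper's: you bound the already-colored column-$i'$ edges visible from a paired edge by using Observation~\ref{obs:vertexB_distinct_colors} at $b$, $b'$ and $b''$, ruling out colored contributions from the $A$-neighbors of $b$ via the step ordering (Types~3 and lonely edges are uncolored) and via Observation~\ref{obs:Type2_adjacent_Type2} for the Type~2 paired case, leaving only the single possible colored edge at the lonely neighbor $b''$ (the paper's $b_0$). The only cosmetic difference is that you treat $e_4$ alone and invoke symmetry for $e_5$, while the paper counts visible colored edges from $ab_1$ and $ab_2$ jointly; the conclusion is identical.
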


\begin{proof}
Let $a$ be a Type~2 vertex, $ab_1$ and $ab_2$ be its paired edges situated in column, say, $j$ of $M_G$, and $ab_0$ be the lonely edge. 
Let us count the number of already colored edges in column $j$ visible from $ab_1$ or $ab_2$. We prove that there is at most one such edge, which moreover is incident to $b_0$.
First recall that, according to Observation~\ref{obs:vertexB_distinct_colors}, none of the edges incident to $b_1$ or $b_2$, except $ab_1$ and $ab_2$, are in column $j$. 
 Consider the neighbors of $b_1$ and $b_2$ distinct from $a$. Without loss of generality we consider one of them, say $a_1$ - neighbor of $b_1$, and assume for contradiction that $a_1$ has at least one already colored incident edge $e$ in column $j$. As mentioned previously, according to Observation~\ref{obs:vertexB_distinct_colors}, $e$ cannot be $a_1b_1$. Thus $a_1$ cannot be a Type~1 vertex. Moreover, if $a_1$ is a Type~3 vertex, then $e$ has not been colored yet. The same happens if $a_1$ is a Type~2 vertex and $e$ is a lonely edge. The last case occurs when $a_1$ is a Type~2 vertex and $e$ is a paired edge: by Observation~\ref{obs:vertexB_distinct_colors}, edge $a_1b_1$ has to be lonely, and then Observation~\ref{obs:Type2_adjacent_Type2} yields a contradiction.
 Now observe that $b_0$ has at most one incident edge in column $j$ by Observation~\ref{obs:vertexB_distinct_colors}. Consequently, at any moment while performing Step 2 of the procedure, two colors among $\{1, 2, 3\}$ are available for $ab_1$ and $ab_2$.
\end{proof}


\subsection*{\textbf{Step 3:} color the edges incident to Type~3 vertices.}

Once again, a correct extension of the partial strong edge-coloring to the edges incident to the Type~3 vertices can be obtained greedily. The following lemma shows that available colors exist for any edge considered during the procedure.

\begin{lemma}
\label{lemma:step3}
After performing Step 2 and any given number of iterations of Step 3, for each edge incident to any given Type~3 vertex there is at least one available color among $\{1,2,3\}$.
\end{lemma}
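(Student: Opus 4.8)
The plan is to exploit the structure of the colors: every already-colored edge carries a color $(i,j)$ with $i\in\{1,2,3\}$ (the color $\ast$ being reserved for Step~4), and two edges conflict only if they lie in the same column and receive the same value of $i$. Hence, fixing an edge $e=ab$ incident to a Type~3 vertex $a$ and lying in column $j$, only the \emph{already-colored edges of column $j$ that are visible from $e$} can forbid a value of $i$. So it suffices to show that at most two such edges exist; then at most two values among $\{1,2,3\}$ are forbidden and a color remains available for $e$.

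I would bound the set of visible colored edges of column $j$ by splitting according to how visibility is realized. Writing $\beta_1,\beta_2$ for the two neighbours of $a$ other than $b$, every edge visible from $e=ab$ either shares $a$, shares $b$, is adjacent to an edge incident to $a$ (hence is incident to $\beta_1$ or $\beta_2$), or is adjacent to an edge incident to $b$ (hence is incident to some neighbour $a'\neq a$ of $b$). The first family lies in columns different from $j$ since $a$ is Type~3; the second reduces to $e$ itself by Observation~\ref{obs:vertexB_distinct_colors}; and the third contributes at most one column-$j$ edge per $\beta_i$, again by Observation~\ref{obs:vertexB_distinct_colors}, for a total of at most two candidates. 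It then remains to rule out the fourth family.

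This last family is the crux. Fix a neighbour $a'\neq a$ of $b$; since $a'b$ is incident to $b$ and differs from $e$, Observation~\ref{obs:vertexB_distinct_colors} shows that $a'b$ is not in column $j$. I would now argue by the type of $a'$. If $a'$ is Type~1, all its edges share the column of $a'b$, which is $\neq j$, so $a'$ has no edge in column $j$. If $a'$ is Type~3, then $a$ and $a'$ are two Type~3 vertices sharing the neighbour $b$, and Observation~\ref{obs:2Type3_common_neighbour} (applied with $e=ab$ playing the role of the edge at the common neighbour) forces $j$ not to be a column of $a'$ at all. If $a'$ is Type~2, its lonely edge is still uncolored (it is treated only in Step~4), so only its paired edges could matter; but its paired column is $\neq j$, either because $a'b$ is itself paired, its column being $\neq j$ as noted, or because $a'b$ is lonely, in which case $a'b$ is adjacent to $e$ and Observation~\ref{obs:Type3_adjacent_Type2} yields that the paired column differs from the column $j$ of $e$. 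In every case $a'$ contributes no colored edge of column $j$.

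Combining the four families, at most two already-colored edges of column $j$ are visible from $e$, so at most two values of $i$ are forbidden and a color among $\{1,2,3\}$ remains available for $e$, as required. The main obstacle is the fourth family, and the decisive ingredient is Observation~\ref{obs:2Type3_common_neighbour}, which prevents a neighbour of $b$ from being a Type~3 vertex carrying a column-$j$ edge and thereby collapses the only configuration that could otherwise block all three colors.
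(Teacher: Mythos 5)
Your proof is correct and follows essentially the same route as the paper's: you bound the already-colored column-$j$ edges visible from $e=ab$ by the (at most) two edges through the other neighbours of $a$ (Observation~\ref{obs:vertexB_distinct_colors}) and eliminate any contribution through neighbours $a'$ of $b$ by the same type analysis, using Observation~\ref{obs:2Type3_common_neighbour} for Type~3 and Observation~\ref{obs:Type3_adjacent_Type2} plus the uncolored-lonely-edge fact for Type~2. Your enumeration of the four visibility families is merely a more explicit organization of the paper's counting argument.
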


\begin{proof}
Let $a$ be a Type~3 vertex with neighbors $b_0$, $b_1$ and $b_2$, and let $j$ be the column of $ab_0$.
Let us count the number of edges visible from $ab_0$, which are already colored and in column $j$. We prove that there can be at most two of them.
Due to Observation \ref{obs:vertexB_distinct_colors}, vertices $b_1$ and $b_2$ can each have at most one incident edge in column $j$. Let $a_0$ be a neighbor of $b_0$ and suppose for contradiction that $a_0$ has an incident edge $e$ in column $j$. By Observation~\ref{obs:vertexB_distinct_colors}, edge $a_0b_0$ cannot be in column $j$, and thus $a_0$ is not a Type~1 vertex. For the same reason, if $a_0$ is of Type~2, edge $a_0b_0$ cannot be paired with $e$; moreover, by Observation~\ref{obs:Type3_adjacent_Type2}, edge $a_0b_0$ cannot be a lonely edge, so $e$ is the lonely edge of $a_0$, and thus is not colored yet. Finally $a_0$ cannot be a Type~3 vertex according to Observation~\ref{obs:2Type3_common_neighbour}. Thus at least one color among $\{1,2,3\}$ is available for $a_0b_0$.
\end{proof}


\subsection*{\textbf{Step 4:} color the lonely edges incident to Type~2 vertices.}

Before explaining how to color the lonely edges explicitly, we first introduce some notions and raise some observations about how these edges appear in $G$.

Let $F$ be a subset of edges of $G$. The \emph{subgraph induced by $F$} is the subgraph induced by the endpoints of the edges of $F$. For each column $j$ of $M_G$, we define the \emph{component of $j$}, denoted $\mathcal{C}_j$, as the subgraph of $G$ induced by the set of lonely edges of column $j$.
Since $G$ is bipartite, observe that every cycle of $\mathcal{C}_j$ have even length. We call a cycle $v_0v_1v_2\ldots v_{k-1}v_0$ of $\mathcal{C}_j$ \emph{alternate} if exactly half of its edges are lonely in column $j$ and, for every pair of consecutive edges $v_iv_{i+1}$ and $v_{i+1}v_{i+2}$ (where $i$ is taken modulo $k$), one is lonely in column $j$ and the other is not (\emph{i.e.} the lonely edges of column $j$ on the cycle are non-adjacent). 
Similarly, we say that a path of $\mathcal{C}_j$ is \textit{alternate} if for every pair of adjacent edges of the path, one of them is lonely in column $j$ and the other is not.
We prove below that each $\mathcal{C}_j$ has a very specific structure. We first start with a direct consequence of Observation \ref{obs:a_branch}.

\begin{observation}
\label{obs: vertex in A with both paired edges in Cj}
Let $j$ be a column of $M_G$. Every Type~2 vertex $a\in A$ appearing in $\mathcal{C}_j$ cannot have both its paired edges in $\mathcal{C}_j$.
\end{observation}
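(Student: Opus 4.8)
The plan is to unwind the definition of $\mathcal{C}_j$ and reduce the statement to a direct application of Observation~\ref{obs:a_branch}. First I would fix a Type~2 vertex $a$ lying in $\mathcal{C}_j$, and denote by $ab_0$ its lonely edge and by $ab_1, ab_2$ its two paired edges (situated in a common column $i$). Since $a\in A$ belongs to $\mathcal{C}_j$, it must be an endpoint of some lonely edge of column $j$; as the only lonely edge incident to $a$ is $ab_0$ (the paired edges are, by definition, not lonely), this forces $ab_0$ to be precisely the lonely edge of column $j$. In particular $i\neq j$, and the three neighbors $b_0,b_1,b_2$ of $a$ are pairwise distinct.

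Next I would translate, for $t\in\{1,2\}$, the condition ‘‘the paired edge $ab_t$ lies in $\mathcal{C}_j$''. Recall that $\mathcal{C}_j$ is the subgraph \emph{induced} by the endpoints of the lonely edges of column $j$, so an edge of $G$ belongs to $\mathcal{C}_j$ exactly when both of its endpoints do. As $a$ is already a vertex of $\mathcal{C}_j$, the edge $ab_t$ lies in $\mathcal{C}_j$ if and only if $b_t$ is also an endpoint of a lonely edge of column $j$. Since $ab_0$ is incident to $b_0\neq b_t$, any lonely edge of column $j$ incident to $b_t$ is necessarily distinct from $ab_0$; moreover such an edge must meet $b_t$ and not $a$ (the only lonely edge at $a$ being $ab_0$). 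Hence ‘‘$ab_t$ lies in $\mathcal{C}_j$'' is equivalent to ‘‘$ab_t$ is adjacent to a lonely edge of column $j$ different from $ab_0$''. Finally I would invoke Observation~\ref{obs:a_branch} with $e_1=ab_0$, $e_2=ab_1$ and $e_3=ab_2$: at least one of $ab_1,ab_2$ is not adjacent to a lonely edge of column $j$ distinct from $ab_0$, and by the equivalence just established, at least one of the two paired edges does not lie in $\mathcal{C}_j$, which is the claim.

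I expect the only delicate point to be the bookkeeping around the definition of the induced subgraph $\mathcal{C}_j$ — in particular, making precise that membership of a paired edge in $\mathcal{C}_j$ is governed solely by whether its $B$-endpoint carries a lonely edge of column $j$, and verifying that such a lonely edge is automatically different from $ab_0$ and realises the adjacency required by Observation~\ref{obs:a_branch}. Once this reformulation is pinned down, there is no genuine combinatorial content left: the statement is an immediate rephrasing of Observation~\ref{obs:a_branch}, exactly as the excerpt announces.
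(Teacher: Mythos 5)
Your proof is correct and takes essentially the same route as the paper: the authors state this observation as ``a direct consequence of Observation~\ref{obs:a_branch}'' without further detail, and your argument supplies exactly that derivation, correctly translating membership of a paired edge $ab_t$ in the induced subgraph $\mathcal{C}_j$ into the condition ``adjacent to a lonely edge of column $j$ different from $ab_0$'' before invoking Observation~\ref{obs:a_branch}. The bookkeeping you flag (the lonely edge of $a$ must itself lie in column $j$, and any lonely edge of column $j$ at $b_t$ is automatically distinct from $ab_0$) is handled correctly, so there is no gap.
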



\begin{lemma}
\label{lemma:component_cycles}
Let $j$ be a column of $M_G$. Every connected component of $\mathcal{C}_j$ has at most one cycle. Moreover, if this cycle exists, then it must be alternate.
\end{lemma}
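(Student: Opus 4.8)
The plan is to determine the structure of $\mathcal{C}_j$ through a degree analysis of its vertices together with a counting argument, and then to read off the local structure around a cycle. Write $A_j = A \cap V(\mathcal{C}_j)$ and $B_j = B \cap V(\mathcal{C}_j)$; since $G$ is bipartite, every edge of $\mathcal{C}_j$ joins $A_j$ to $B_j$, so in particular every edge has a unique endpoint in $A$.

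First I would bound the degree in $\mathcal{C}_j$ of each $a \in A_j$. By definition $a$ is a Type~2 vertex whose lonely edge lies in column $j$; this lonely edge belongs to $\mathcal{C}_j$, while by Observation~\ref{obs: vertex in A with both paired edges in Cj} at most one of its two paired edges belongs to $\mathcal{C}_j$. Hence every vertex of $A_j$ has degree $1$ or $2$ in $\mathcal{C}_j$, its lonely edge being always present and at most one paired edge being present. Next I would note that the lonely edges of column $j$ form a perfect matching of $\mathcal{C}_j$: each $a \in A_j$ carries exactly one such edge, by Observation~\ref{obs:vertexB_distinct_colors} distinct vertices of $A_j$ send their lonely edges to distinct vertices of $B_j$, and conversely every vertex of $B_j$ is by definition the endpoint of such a lonely edge. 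As each matching edge stays inside a single connected component, for every component $C$ this gives $|A_j \cap C| = |B_j \cap C|$.

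Counting the edges of a component $C$ through their unique $A$-endpoint then yields $|E(C)| = \sum_{a \in A_j \cap C} \deg_{\mathcal{C}_j}(a) \le 2\,|A_j \cap C| = |A_j \cap C| + |B_j \cap C| = |V(C)|$. A connected graph with $|E(C)| \le |V(C)|$ has cyclomatic number at most one, hence contains at most one cycle, which settles the first assertion. For the second, suppose $C$ contains a cycle. Since $\mathcal{C}_j$ is bipartite the cycle has even length and alternates between $A_j$ and $B_j$, so I would examine the two consecutive edges at each of its vertices. At a vertex $a \in A_j$ on the cycle, its two cycle-edges already exhaust its degree (which is at most $2$), so they are exactly its lonely edge and one paired edge; thus one is lonely in column $j$ and the other is not. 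At a vertex $b \in B_j$ on the cycle, its unique edge in column $j$ (Observation~\ref{obs:vertexB_distinct_colors}) is a lonely edge, whereas the other cycle-edge does not lie in column $j$ and is therefore not lonely in column $j$. Hence at every vertex exactly one of the two consecutive edges is lonely in column $j$, which is precisely the definition of an alternate cycle.

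The only genuinely delicate point is the degree bound at the $A$-vertices, which is where Observation~\ref{obs: vertex in A with both paired edges in Cj} (and through it Observation~\ref{obs:a_branch}) is essential; everything else is bookkeeping. I expect the perfect-matching remark, and the resulting equality $|A_j \cap C| = |B_j \cap C|$, to be the cleanest way to upgrade the degree bound $\deg_{\mathcal{C}_j}(a)\le 2$ into the sharp inequality $|E(C)| \le |V(C)|$ that forbids a second cycle and simultaneously forces every cycle vertex to have degree exactly $2$, from which the alternation falls out directly.
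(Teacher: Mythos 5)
Your global counting framework is sound and genuinely different from the paper's argument, but there is one genuine gap, located at the $B$-vertices of the cycle. When you write that at a cycle vertex $b \in B_j$ ``its unique edge in column $j$ is a lonely edge, whereas the other cycle-edge does not lie in column $j$'', you tacitly assume that the lonely edge incident to $b$ is one of the two cycle edges at $b$. Nothing you proved supports this: your degree analysis only controls the $A$-side, and nothing bounds $\deg_{\mathcal{C}_j}(b)$. A priori both cycle edges at $b$ could be paired edges of its two cycle-neighbours in $A_j$, with the lonely edge of $b$ dangling off the cycle towards a third vertex of $A_j$ --- no observation of the paper excludes this configuration locally. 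Your closing sentence suggests you expected equality in $|E(C)| \le |V(C)|$ to force every cycle vertex to have degree exactly $2$; that is false on the $B$-side, since a unicyclic component satisfies $|E(C)| = |V(C)|$ yet can have trees attached to its cycle (and such trees genuinely occur in this paper: Phase~2 of Step~4 is precisely about colouring trees rooted on $C$). Equality forces degree exactly $2$ only at the vertices of $A_j \cap C$, because those are the vertices through which you counted the edges.

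The gap closes in two sentences with material you already set up. Write the cycle as $a_1b_1a_2b_2\ldots a_kb_ka_1$. Each $a_i$ lies on the cycle and has $\deg_{\mathcal{C}_j}(a_i)\le 2$, so its two cycle edges exhaust its edges in $\mathcal{C}_j$; since its lonely edge always belongs to $\mathcal{C}_j$, that lonely edge is a cycle edge. This yields $k$ lonely edges on the cycle whose $B$-endpoints are pairwise distinct by Observation~\ref{obs:vertexB_distinct_colors}; as the cycle has exactly $k$ vertices of $B$, every $b_i$ is incident to exactly one lonely cycle edge, and by Observation~\ref{obs:vertexB_distinct_colors} its other cycle edge is not in column $j$, hence not lonely. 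Only now does your local claim at $b$ hold, and the alternation follows. With this patch your proof is complete and is in fact more economical than the paper's: the paper proves alternation by walking around a putative non-alternate cycle to reach a contradiction with Observation~\ref{obs: vertex in A with both paired edges in Cj}, and proves uniqueness of the cycle by separately excluding two alternate cycles sharing a vertex, sharing an edge, or joined by a path (via parity of alternate paths), whereas your perfect-matching remark and the cyclomatic bound $|E(C)| \le |V(C)|$ give uniqueness in one stroke, independently of alternation.
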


\begin{proof}
Observe that $\mathcal{C}_j$ has no lonely edge $e$ which is not in column $j$. Otherwise, since we are considering the component $\mathcal{C}_j$, the endpoint of $e$ in part $A$ would be incident to a lonely edge of column $j$ contradicting the definition of a Type~2 vertex. Therefore, from now on in this proof, when speaking about lonely edges of $\mathcal{C}_j$ we will refer to lonely edges of column $j$.

First we show that all cycles in $\mathcal{C}_j$ are alternate. 
Suppose by contradiction that there is a cycle in $\mathcal{C}_j$ which is non-alternate. Observe first that there cannot be two adjacent lonely edges in $\mathcal{C}_j$ (otherwise there would be two lonely edges incident to a same vertex in $A$ or $B$, which is impossible by the definition of a Type~2 vertex and Observation~\ref{obs:vertexB_distinct_colors}). Thus by hypothesis the non-alternate cycle must have two non-lonely adjacent edges $e$ and $e'$ sharing a same vertex $v$.

Observe first that $v$ cannot be in part $A$: otherwise, these two non-lonely edges $e$ and $e'$ would be the paired edges of $v$, a contradiction with Observation \ref{obs: vertex in A with both paired edges in Cj}.
Now suppose that $v$ is in part $B$. We denote the non-alternate cycle by $C=b_0a_0b_1a_1...b_ka_kb_0$, where each vertex $a_i$ (resp. $b_i$) belongs to $A$ (resp. $B$). Assume $v=b_0$, as well as $e=b_0a_0$ and $e'=b_0a_k$. Then, since no vertex $a_i$ has its two paired edges along $C$ (according to Observation \ref{obs: vertex in A with both paired edges in Cj}), we get that $a_0b_1$ is lonely. Now, since two lonely edges cannot be adjacent, $b_1a_1$ is not lonely. Repeating these arguments along the edges of $C$, we get that every edge $b_ia_{i}$ with $0\leq i \leq k$ is non-lonely, while every $a_ib_{i+1}$ for $0\leq i \leq k-1$ is lonely. Then we get that the two edges incident to $a_k$ along $C$ are not lonely, which contradicts Observation \ref{obs: vertex in A with both paired edges in Cj}.

Therefore, all the cycles of the component are alternate.

\medskip

Now we prove that there can be only one alternate cycle (if any) in every connected component of $\mathcal{C}_j$. Suppose by contradiction that there are two alternate cycles in a connected component of $\mathcal{C}_j$. We show the following properties about these two cycles to end up with a contradiction:
\begin{enumerate}
\item the two cycles cannot share a vertex without sharing an edge,
\item the two cycles cannot share an edge,
\item the two cycles cannot be joined by a path in the component.
\end{enumerate}

The first property follows from the fact that the two cycles are alternate and there cannot be two adjacent lonely edges in a same component. Suppose by contradiction that the second property is false, \emph{i.e.} that two cycles share an edge. Let $P=v_1\cdots v_k$ be one longest alternate path shared by theses cycles. Observe that $v_1v_2$ must be lonely (since otherwise there would be two adjacent lonely edges) and $v_1$ must have two other incident non-lonely edges - one in each of the two cycles. We call these edges $e$ and $e'$ respectively, and observe then that $v_1\notin A$ thanks to Observation \ref{obs: vertex in A with both paired edges in Cj}.
However, by the same arguments, $v_k$ must be in $B$ as well, and $v_{k-1}v_k$ must be a lonely edge. Then $P$ is an alternate path of odd length between $v_1$ and $v_k$ which are both in part $B$, a contradiction since $\mathcal{C}_j$ is bipartite.

Finally, in order to show the third property, suppose by contradiction that there is a path connecting the two cycles in the component. Consider in particular the shortest path $P$ with extremities $u$ and $v$, where $u$ lies on the first cycle while $v$ lies on the second one. Recall that the cycles are alternate, and thus one edge of the first cycle incident to $u$ is lonely, and similarly for $v$ with respect to the second cycle. Recall also that $P$ is a subgraph of $\mathcal{C}_j$. Therefore, by Observation~\ref{obs: vertex in A with both paired edges in Cj}, none of $u$ and $v$ can be in part $A$: otherwise, they would have two paired edges in $\mathcal{C}_j$ - one on the cycle, and one on $P$.

Let us hence denote $P = b_0a_0b_1a_1...b_k$, where $u=b_0$ and $v=b_k$, and $k \geq 2$ is even. Now consider the successive vertices of $P$, \textit{i.e.} from $b_0$ to $b_k$.
By Observation \ref{obs:vertexB_distinct_colors}, the edge $b_0a_0$ cannot be lonely since $b_0$ already has an incident lonely egde on the first cycle. Now, $a_0b_1$ has to be lonely, since otherwise $a_0$ would have both its paired edges in $\mathcal{C}_j$. Repeating the same arguments until we reach $b_k$, we get that every edge $b_ia_{i}$ is not lonely, while every edge $a_ib_{i+1}$ is lonely, for $0\leq i \leq k-1$. Then $b_k$ is incident to two lonely edges (one is $a_{k-1}b_k$ and the other one is on the second cycle), a contradiction with Observation~\ref{obs:vertexB_distinct_colors}. 
\end{proof}
\medskip

We now explain how to color the lonely edges in order to finish the coloring of $G$. Recall that during this step, we allow the use of the special color $\ast$. Consider every successive value of $j \in \{1, ..., \Delta\}$. We may assume that $\mathcal{C}_j$ is connected (if not, apply the procedure below component-wisely). The lonely edges of $\mathcal{C}_j$ are colored in up to two phases as follows:

\medskip

\textbf{Phase~1:} In case $\mathcal{C}_j$ has an induced cycle $C$, it is unique and alternate according to Lemma~\ref{lemma:component_cycles}. Let $C=a_1b_1a_2b_2...a_{k}b_{k}a_1$ be this cycle, where $a_1b_1, a_2b_2,\ldots,a_{k}b_{k}$ are its lonely edges and $a_i\in A$ (resp. $b_i\in B$) for $1\leq i \leq k$. Then greedily color the edges $a_1b_1,\ldots,a_{k-1}b_{k-1}$, in this order, with colors among $\{1, 2, 3\}$ in order to obtain a partial strong edge-coloring. Color the remaining lonely edge $a_{k}b_{k}$ with color $\ast$. 

\medskip

\textbf{Phase~2:} If $C$ exists, then first remove its edges to get a (possibly empty) forest. Each tree $T$ of the forest will have as a root a vertex $r$ of $C$. If the component had no cycle $C$, we designate an arbitrary node of $T$ to be the root $r$.
Then greedily color with colors among $\{1, 2, 3, *\}$ the remaining uncolored lonely edges of $T$ as they are encountered during a Breadth-First Search (BFS) algorithm performed from $r$.

\medskip

The following two results show that Phases~1 and~2 can always be performed correctly.

\begin{lemma}
During Phase~1, for every lonely edge of $C$ there is at least one available color among $\{1,2,3,\ast\}$.
\end{lemma}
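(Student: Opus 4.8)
The plan is to fix the lonely edge $a_ib_i$ of $C$ that Phase~1 is about to color and to bound the number of already-colored edges lying in column $j$ that are visible from $a_ib_i$, since only these forbid a value of the first coordinate. By Observation~\ref{obs:vertexB_distinct_colors}, neither $a_i$ nor $b_i$ is incident to a second edge of column $j$, so $a_ib_i$ has no \emph{adjacent} edge in column $j$; every conflicting edge is therefore at distance two, reached through a common adjacent edge $f$ incident either to $a_i$ (one of its two paired edges) or to $b_i$. First I would treat these two families separately.

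On the side of $a_i$, one paired edge is the non-lonely cycle edge $b_{i-1}a_i$ and the other, call it $a_ib^{(i)}$, leaves the cycle; the inducedness of $C$ (guaranteed by Lemma~\ref{lemma:component_cycles}, which makes $C$ the unique, chordless cycle of the component) forbids $b^{(i)}$ from lying on $C$. Using Observation~\ref{obs:vertexB_distinct_colors} again, each of $b_{i-1}$ and $b^{(i)}$ carries at most one edge of column $j$: at $b_{i-1}$ this is the cycle edge $a_{i-1}b_{i-1}$, colored earlier in Phase~1 (hence in $\{1,2,3\}$) unless $i=1$; at $b^{(i)}$ it is some edge $\tilde a b^{(i)}$, which is either colored in Steps~1--3 (hence in $\{1,2,3\}$) or is itself a lonely edge of column~$j$. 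In the latter case it lies in the same component as $a_ib_i$ but, not being on $C$, is handled in Phase~2 and is thus not yet colored. So the $a_i$-side contributes at most two forbidden colors, all of them in $\{1,2,3\}$.

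On the side of $b_i$, each neighbor $a'\neq a_i$ could in principle carry an edge of column $j$, and I would rule out most cases using the structural observations. A Type~1 vertex $a'$ of column~$j$ is impossible, since it would give $b_i$ a second column-$j$ edge against Observation~\ref{obs:vertexB_distinct_colors}; a Type~3 vertex $a'$ has no column-$j$ edge at all by Observation~\ref{obs:Type3_adjacent_Type2} applied to $a'$ and $a_i$ through their common neighbor $b_i$; and a Type~2 vertex $a'$ whose paired edges lie in column~$j$ is excluded by Observation~\ref{obs:Type2_adjacent_Type22}, since then $a'b_i$ and $a_ib_i$ would be two adjacent lonely edges forcing the paired column of $a'$ away from $j$. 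The only surviving case is a Type~2 vertex $a'$ whose lonely edge lies in column~$j$; such an edge is colored during Step~4, and it has been colored before $a_ib_i$ only when $a'$ is the cyclic successor of $a_i$ on $C$ and the index wraps around, that is, only for $i=k$ (yielding $a_1b_1$, in $\{1,2,3\}$). Hence the $b_i$-side contributes nothing for $i<k$ and exactly one color, lying in $\{1,2,3\}$, for $i=k$.

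Putting the two sides together, a lonely edge $a_ib_i$ with $i<k$ sees at most two already-colored column-$j$ edges, all colored in $\{1,2,3\}$, so a color of $\{1,2,3\}$ is free and the greedy choice of Phase~1 succeeds; the last edge $a_kb_k$ sees at most three, again all in $\{1,2,3\}$, so the reserved color $\ast$ is available. In particular every lonely edge of $C$ admits an available color in $\{1,2,3,\ast\}$. The hard part will be the analysis of the off-cycle paired edge $a_ib^{(i)}$: one must argue, via the uniqueness and chordlessness of the cycle from Lemma~\ref{lemma:component_cycles}, that the column-$j$ edge it reaches is never a Phase-1 cycle edge, so that the only way it can already be colored is through Steps~1--3, keeping its value out of $\{\ast\}$. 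This is precisely what guarantees that no visible conflict carries $\ast$, and hence that $\ast$ remains free when $a_kb_k$ is finally colored.
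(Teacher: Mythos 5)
Your proof is correct and follows essentially the same strategy as the paper's: bound the already-colored column-$j$ edges visible from $a_ib_i$ using Observations~\ref{obs:vertexB_distinct_colors}, \ref{obs:Type3_adjacent_Type2} and~\ref{obs:Type2_adjacent_Type22} together with the uniqueness of the cycle from Lemma~\ref{lemma:component_cycles}, getting at most two such edges when $i<k$ and concluding that $\ast$ remains free for $a_kb_k$ since no visible colored edge can carry it. Your treatment is in fact slightly more explicit than the paper's (chordlessness of $C$, identification of the wrap-around successor $a_{i+1}$), but the decomposition and the key ingredients are identical.
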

\begin{proof}
Assume that the edges $a_1b_1,\ldots, a_{i-1}b_{i-1}$ have already been colored and let $a_{i}b_{i} \neq a_{k}b_{k}$ be the considered lonely edge. Recall that, due to our ordering, the edge $a_{i+1}b_{i+1}$ is uncolored. Recall also that no other edge adjacent to $a_{i}b_{i}$ in $G$ is in column $j$ of $M_G$ (according to Observation~\ref{obs:vertexB_distinct_colors} and the definition of a Type~2 vertex). 
Let us count the number of edges visible from $a_ib_i$ which are already colored and in column $j$. Let us prove that there can be at most two of them. One of them is $a_{i-1}b_{i-1}$.
Let $b$ be the third neighbor of $a_{i}$. By Observation~\ref{obs:vertexB_distinct_colors}, at most one of the edges incident to $b$ can be in column $j$. Now consider a neighbor $a$ of $b_i$ (different from $a_i$) and assume it has an incident edge $e$ in column $j$. By Observation~\ref{obs:vertexB_distinct_colors}, $a$ cannot be a Type 1 vertex, nor a Type 2 vertex where $e$ would be paired with $ab_i$. By Observation~\ref{obs:Type2_adjacent_Type22}, if $a$ is a Type 2 vertex, then $e$ is lonely and thus not yet colored: indeed, there exists at most one cycle per component (according to Lemma \ref{lemma:component_cycles}), and, for now, we have colored only lonely edges involved in a cycle. Finally $a$ cannot be a Type 3 vertex according to Observation~\ref{obs:Type3_adjacent_Type2}. Therefore, one color among $\{1,2,3\}$ is available to color $a_{i}b_{i+1}$. As for $a_{k}b_{k}$, no other edge of the same connected component of $\mathcal{C}_j$ is colored with~$\ast$, so coloring this edge cannot create any conflict (note that $a_kb_k$ can have three visible edges in column $j$ and thus none of $\{1,2,3\}$ may be available). This completes the proof.
\end{proof}

\begin{lemma}
During the BFS algorithm in Phase~2, for every lonely edge of $T$ there is at least one available color among $\{1,2,3,\ast\}$.
\end{lemma}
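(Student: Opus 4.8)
The plan is to mimic the counting argument of the preceding Phase~1 lemma, but to exploit the extra color $\ast$ together with the BFS ordering. Fix a lonely edge $e=ab$ with $a\in A$ a Type~2 vertex whose lonely edge $e$ lies in column $j$ and whose paired edges lie in some column $i\neq j$, and $b\in B$. Since the second coordinate of every color is the (fixed) column, a conflict can only be caused by an already-colored edge that lies in column $j$ and is visible from $e$. As in the previous proofs, Observation~\ref{obs:vertexB_distinct_colors} together with the definition of a Type~2 vertex shows that no edge \emph{adjacent} to $e$ is in column $j$; hence every dangerous edge shares with $e$ a common adjacent edge, which is necessarily one of the two paired edges of $a$ (reaching some $c\in B$) or one of the edges $ba'$ at $b$ (reaching some $a'\in A$). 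I would show that at most three such edges are already colored, so that one of $\{1,2,3,\ast\}$ remains free.

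First I would deal with the $b$-side and rule out all \emph{non-lonely} column-$j$ edges there, reusing verbatim the argument of the Phase~1 lemma. If a neighbor $a'$ of $b$ (with $a'\neq a$) carries a column-$j$ edge, then by Observation~\ref{obs:vertexB_distinct_colors} the edge $a'b$ is not in column $j$, so $a'$ is not Type~1 and, if Type~2, that column-$j$ edge is not paired with $a'b$; Observation~\ref{obs:Type2_adjacent_Type22} then forbids $a'$ from being Type~2 with both paired edges in column $j$, and Observation~\ref{obs:Type3_adjacent_Type2} forbids $a'$ from being Type~3. Consequently every column-$j$ edge at a neighbor of $b$ must be the \emph{lonely} edge of that neighbor, i.e.\ a lonely neighbor of $e$.

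The heart of the proof is then to bound how many of these lonely neighbors are \emph{already colored}. I would first observe that neither $a$ nor $b$ lies on the cycle $C$: otherwise the unique column-$j$ edge of that endpoint, namely $e$, would be a lonely edge of $C$ and would already have been colored during Phase~1. Hence all edges of $\mathcal{C}_j$ incident to $a$ or $b$ survive in the forest obtained after deleting $E(C)$, and $e$ is a genuine tree edge traversed exactly once by the BFS. Using that the BFS explores the tree level by level, I would argue that, on the $b$-side, only the lonely edge of the \emph{parent} of $b$ can already be colored when $e$ is reached (and only when $a$ is a child of $b$), contributing at most one; all the remaining neighbors of $b$ are children of $b$, whose lonely edges are colored strictly later. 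On the $a$-side, Observation~\ref{obs: vertex in A with both paired edges in Cj} guarantees that at most one paired edge of $a$ belongs to $\mathcal{C}_j$: the other paired edge reaches a vertex outside $\mathcal{C}_j$, whose column-$j$ edge (if any) is non-lonely and was therefore colored during Steps~1--3, contributing at most one; the paired edge inside $\mathcal{C}_j$ contributes at most one further lonely neighbor. Summing the contributions yields at most three already-colored visible edges in column $j$ (in fact the parent-side analysis shows the two parent contributions cannot occur simultaneously, so the true bound is two).

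The step I expect to be the main obstacle is precisely this $b$-side count: the vertex $b$ may be a high-degree hub incident to many lonely neighbors of $e$, all mutually invisible but all visible from $e$, so a naive degree bound fails completely. The argument must therefore rest entirely on the BFS ordering in the tree $T$ and on the fact that $b\notin C$, to certify that all but (at most) one of these lonely neighbors are still uncolored at the moment $e$ is processed. Once the bound of three is in hand, the four available colors $\{1,2,3,\ast\}$ leave at least one free for $e$, which finishes Phase~2 and, with it, the whole coloring procedure.
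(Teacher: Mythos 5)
Your proof is correct and takes essentially the same route as the paper's: you rule out non-lonely column-$j$ edges on the $b$-side via Observations~\ref{obs:vertexB_distinct_colors}, \ref{obs:Type3_adjacent_Type2} and~\ref{obs:Type2_adjacent_Type22}, and then use the BFS ordering in $T$ to cap the already-colored lonely neighbours at one, for a total of at most three colored visible column-$j$ edges against the four colors $\{1,2,3,\ast\}$. Your explicit check that neither endpoint of $e$ lies on $C$ subsumes the paper's separate treatment of lonely edges incident to the root, and your use of Observation~\ref{obs: vertex in A with both paired edges in Cj} to split the $b_1$/$b_2$ contributions (yielding the sharper bound of two) is a harmless refinement of the paper's simpler ‘‘both may already be colored'' estimate.
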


\begin{proof}
Consider a Type~2 vertex $a\in A$ with lonely edge $ab_0\in T$, where $a \in A$ and $b_0 \in B$. So $a$ is Type~2 with paired edges $a_0b_1$ and $a_0b_2$. Then $b_1$ and $b_2$ can be each incident to at most one edge in column $j$, and each of these two edges may be colored already (for example, if both $b_1$ and $b_2$ are adjacent to a Type~1 vertex in column $j$).

We now prove that the other edges visible from $ab_0$ and in column $j$ have to be lonely, and that at most one of them is already colored.
Consider any edge $b_0a_0$ different from $ab_0$ and adjacent to an edge $e$ in column $j$. By Observation \ref{obs:vertexB_distinct_colors}, $b_0a_0$ cannot be in column $j$. Then $a_0$ is either of Type~2 or Type~3. Actually, $a_0$ cannot be of Type~3 according to Observation~\ref{obs:Type3_adjacent_Type2}. Also, according to Observation~\ref{obs:Type2_adjacent_Type22}, $a_0$ cannot be of Type~2 with $b_0a_0$ being lonely and its paired edges being of column~$j$. So necessarily $a_0$ is of Type~$2$ with lonely edge $e$ in column~$j$. Observe that $b_0$ may have several neighbors playing the same role as $a_0$, \emph{i.e.} incident to an edge $e'$ in column $j$, but then the same argument applies and $e'$ is lonely.

The important remark to raise is that the BFS algorithm performed on $T$ from $r$ ensures that, whenever a lonely edge $e$ is treated, at most one lonely edge of $T$ (and $C$, if it exists) visible from $e$ has already been colored: indeed, assume first that $e$ is not adjacent to the root and call $v_{\uparrow}$ (resp. $v_{\downarrow}$) the endpoint of $e$ which is closer (resp. further) to the root $r$. Then $e$ is the only lonely edge adjacent to $v_{\uparrow}$; call $v_{\uparrow\uparrow}$ the father of $v_{\uparrow}$ in the tree: $v_{\uparrow\uparrow}$ has only one incident lonely edge in column $j$ (which happens to be colored before $e$); finally, any subtree rooted at a son of $v_{\uparrow}$, or rooted at $v_{\downarrow}$ is not colored yet. So $e$ sees at most one already colored lonely edge.
Let us now deal with a lonely edge $e$ incident to the root $r$: if $\mathcal{C}_j$ had a cycle, then we chose the root $r$ to be on $C$; consequently $r$ already has an incident lonely edge on the cycle, a contradiction with Observation \ref{obs:vertexB_distinct_colors} and the definition of a Type~2 vertex. Otherwise, the component had no cycle, and thus no lonely edge visible from $e$ has already been colored.

\end{proof}


\section{Conclusion and possible improvements}

In this paper, we have proved that, for every $(3, \Delta)$-bipartite graph $G$, we have $\chi'_s(G) \leq 4\Delta$. This result, together with Theorem~\ref{thm:subcubic}, confirms Conjecture~\ref{conj:faudree_bipartite} for this specific family of bipartite graphs. We however believe that our upper bound should not be tight, as stated in Conjecture~\ref{conj:brualdi_bipartite} where $3\Delta$ is conjectured to be the right bound.

\subsection*{Avoiding using $\ast$}

Maybe the  upper bound we have obtained, could be improved by refining the coloring procedure introduced in Section~\ref{section:main}. To do so, one would have to find a way to do without color~$\ast$, \emph{i.e.} color every lonely edge of $\mathcal{C}_j$ with ‘‘regular'' color $1$, $2$ or $3$. One optimistic reason why this should be possible is that each such color is only used when $1$, $2$ and $3$ are all forbidden, that is when coloring the connected components of the $\mathcal{C}_j$'s during Step~4.

On the one hand, color~$(\ast, j)$ is always used, in Phase~1 of Step~4, once for each alternate cycle of $\mathcal{C}_j$. But this use of $(\ast,j)$ is sometimes not necessary. The main purpose for us to systematically use it is to facilitate and lighten the proof of Theorem~\ref{thm:main} by avoiding a tedious case analysis. But one may  note that the only bad situation, that is when the use of color $(\ast, j)$ might be necessary to color the lonely edges of $C$, is when the following three conditions are satisfied:
\begin{itemize}
\item $C$ is of length $2k$ with $k \geq 3$ odd;
\item every vertex $a_i\in A$ of $C$ is at distance 2 from a Type~1 vertex $a'_i$ of column $j$ --	 call $e'_i$ the edge which is not in $C$ and which joins $a'_i$ and the common neighbor of $a_i$ and $a'_i$;
\item and every edge $e'_i$ has been assigned exactly the same color $(i,j)$ in Step~1 with $i\in \{1,2,3\}$.
\end{itemize}

On the other hand, color $(\ast, j)$ may also be used during Phase~2 of Step~4 to color a lonely edge of a tree $T$ of $\mathcal{C}_j - C$. A careful analysis shows that actually color $(\ast, j)$ may only be needed for lonely edges incident to a leaf of $T$, and if the around vertices are colored in an unfavourable way (typically when several Type~1 vertices surround the leaf).

We believe that if it would be possible to decrease the number of colors used in our procedure, these two bad cases above should be the ones to tackle.
To this regard, choosing $M_G$ among all maximal matrices so that it meets additional convenient properties such as minimizing the number of alternate cycles would be interesting to investigate. Also, it is worth pointing out that many tasks of the coloring process are performed arbitrarily (\textit{e.g.} coloring the edges during Steps~$1$ to~$3$, the choice of $r$ during Phase~$2$ of Step~$4$, etc.). Searching for better choices would be another promising perspective.

\subsection*{From $3$ to higher values of $\Delta(A)$}

An interesting perspective of research is to investigate whether the coloring scheme we have used herein may be generalized to larger values of $\Delta(A)$. One could indeed, based on some maximum matrix $M_G$ describing $G$, organize the incident edges of every vertex in $A(G)$ into \textit{maximal groups of paired edges}, \textit{i.e.} edges in a same column of $M_G$, and generalize the coloring scheme described in Section~\ref{section:main}. Namely, one could first color the maximal groups of $\Delta(A)$ paired edges (which correspond to the notion of Type~1 vertex herein), then color the maximal groups of $\Delta(A)-1$ paired edges, and so on, and show that such strong extensions exist according to generalized versions of Observations~\ref{obs:vertexB_distinct_colors} to 9. Following the same idea as in our proof of Theorem~\ref{thm:main}, the algorithm to color the graph would require $\Delta(A)+1$ steps. But the success of this task does not seem immediate to us. In particular, the last step of the new procedure seems hard to define, a simple adaptation of Step~4 from the proof of Theorem~\ref{thm:main} being not clear. This is due to the fact that expressing accurately how the maximal groups of paired edges are organized in $G$ in general, is not easy.

\subsection*{Complexity matters}

For computational complexity, our proof yields a polynomial-time algorithm to deduce a strong $4\Delta$-edge-coloring of a given $(3, \Delta)$-bipartite graph $G$. Indeed, first note that a coloring can easily be obtained once $M_G$ is known, since assigning a color to an edge then just requires to check what are its neighboring colors. We start the coloring process with any matrix $M_G$ (not necessarily maximal). Then during the coloring process, if at some particular step the coloring cannot be achieved, then that would imply that $M_G$ is not maximal (since one of the observations would not be satisfied). Moreover, in this case we would know which are the entries of $M_G$ to be permuted in order to obtain another matrix $M'_G$ which would be greater than $M_G$. Thus we restart the coloring process on $M'_G$. In the worst case, the coloring process will be restarted $O(|V(A)|^2)$ times until we reach a matrix $M'_G$ which is maximal. This clearly shows that the coloring is obtained in polynomial time.

On the other hand, it turns out that obtaining $M_G$ is an \textsf{NP}-hard problem in general. In order to prove this statement, let us introduce the following problem.

\medskip

\noindent \textsc{Maximum Number of Type~1 Vertices} \\
\noindent Instance: a bipartite graph $G$ and an integer $\ell \geq 1$.\\
\noindent Question: does there exist a matrix describing $G$ yielding at least $\ell$ Type~1 vertices?

\medskip

Our statement above follows from a polynomial-time reduction from the following problem, where a \textit{properly $k$-vertex-colorable graph} is a graph admitting a \textit{proper $k$-vertex-coloring}, that is a partition of its vertices into $k$ independent sets (\textit{i.e.} with no adjacent vertices).

\medskip

\noindent \textsc{Maximum Properly $k$-Vertex-Colorable Subgraph} \\
\noindent Instance: a graph $G$ and an integer $\ell \geq 1$. \\
\noindent Question: does there exist a properly $k$-vertex-colorable subgraph of $G$ with at least $\ell$ vertices?

\medskip

\textsc{Maximum Properly $2$-Vertex-Colorable Subgraph} is known to remain \textsf{NP}-complete when its input graph is of maximum degree~$3$ (see~\cite{CNR89}). Using this fact, we prove the following result establishing the hardness of \textsc{Maximum Number of Type~1 Vertices}.
\begin{theorem}
\textsc{Maximum Number of Type~1 Vertices} is \textsf{NP}-complete, even when restricted to $(3, 2)$-bipartite graphs.
\end{theorem}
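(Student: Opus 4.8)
The plan is to first settle membership in \textsf{NP}, and then to reduce \textsc{Maximum Properly $2$-Vertex-Colorable Subgraph} restricted to graphs of maximum degree $3$ to \textsc{Maximum Number of Type~1 Vertices} restricted to $(3,2)$-bipartite graphs. Membership in \textsf{NP} is immediate: a matrix $M$ describing $G$ is a certificate of polynomial size, and checking that $M$ indeed describes $G$ (every row lists exactly the edges incident to the corresponding vertex of $B$) and then counting its Type~$1$ vertices (those of $A$ whose incident edges all lie in one column) can be done in polynomial time.

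For the hardness, let $H$ be an instance of \textsc{Maximum Properly $2$-Vertex-Colorable Subgraph} with $\Delta(H)\le 3$; recall that a subset $S\subseteq V(H)$ induces a properly $2$-colorable (equivalently, bipartite) subgraph exactly when $H[S]$ admits a partition into two independent sets. The backbone of the construction is the subdivision $G_0$ of $H$: put an $A$-vertex $a_v$ for every $v\in V(H)$, and for every edge $uv\in E(H)$ a $B$-vertex $b_{uv}$ adjacent precisely to $a_u$ and $a_v$. Then every $b_{uv}$ has degree $2$ and every $a_v$ has degree $\deg_H(v)\le 3$, so $G_0$ is $(3,2)$-bipartite and its matrices use the three columns $\{1,2,3\}$. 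The two edges of each row $b_{uv}$ must receive distinct columns, so, writing $c_v$ for the common column of a Type~$1$ vertex $a_v$, the set $T$ of Type~$1$ vertices together with $v\mapsto c_v$ is exactly a proper \emph{$3$-coloring} of $H[T]$. Thus, for the bare skeleton $G_0$, the maximum number of Type~$1$ vertices equals the size of a largest induced properly $3$-colorable subgraph of $H$ --- which is the wrong quantity, and is in fact trivial for $K_4$-free graphs of maximum degree $3$.

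The core of the reduction is therefore to append, to each $a_v$, a \emph{color-restriction gadget} that leaves $a_v$ only two admissible Type~$1$ colors, and to coordinate these gadgets so that a single pair of colors is available to all the $a_v$ simultaneously; with only two usable colors the inequality constraint along each edge of $H$ turns ``$T$ is Type~$1$'' into a proper \emph{$2$-coloring} of $H[T]$, i.e.\ into bipartiteness. The mechanism I would use is an \emph{anchor}: a degree-$3$ $A$-vertex $g$ all of whose neighbors have degree $2$; once $g$ is made Type~$1$ with some color $\gamma$, each neighboring row is forced to avoid $\gamma$ in its second entry, which forbids $\gamma$ as a Type~$1$ color of the vertex sitting on the other side. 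Attaching such an anchor to every $a_v$, and propagating a common forbidden color through equality gadgets built from even alternating paths confined to a fixed $2$-element list, restricts all the $a_v$ to a common pair of colors. The gadgets are designed so that all of their own vertices can always be made Type~$1$ regardless of the choices made on the $a_v$'s, hence they contribute a fixed additive constant $K$ to the count. One then proves the two directions: from a bipartite $H[S]$ one $2$-colors $S$, sets $c_v$ accordingly on $S$, colors the gadgets to realize $K$, and obtains a matrix with $K+|S|$ Type~$1$ vertices; conversely, a matrix with at least $K+\ell$ Type~$1$ vertices exhibits at least $\ell$ Type~$1$ real vertices whose common columns, all lying in the forced pair, give a proper $2$-coloring of the subgraph they induce. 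Setting the target to $K+\ell$ completes the reduction, and the $(3,2)$-bipartite restriction holds automatically since every introduced vertex has the required degree.

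The delicate point --- and the step I expect to be the main obstacle --- is the gadget design itself, because the three columns are fully interchangeable: relabeling columns is a symmetry of the problem, so no gadget can forbid a \emph{named} color, only impose symmetric constraints such as ``these two vertices use different columns'' or ``this vertex avoids the color of that Type~$1$ vertex''. Forcing a \emph{single common} forbidden color across all anchors, while simultaneously (i) keeping $\Delta(A)\le 3$ and $\Delta(B)\le 2$ --- which is tight precisely at the degree-$3$ vertices of $H$, where $a_v$ has no free incidence left to attach its anchor, so that such a $v$ must first be expanded into a small tree of degree-$3$ vertices tied to a common color by equality gadgets --- and (ii) guaranteeing that the gadget vertices always admit a simultaneous Type~$1$ assignment so that their contribution is exactly the constant $K$, is where the real work lies. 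Verifying that these requirements can be met at once, and that no uncontrolled extra Type~$1$ vertices appear to corrupt the count, is the crux of the argument.
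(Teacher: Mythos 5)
There is a genuine gap, and it stems from a misreading of the matrix definition rather than from the reduction itself. In the paper's setup, a matrix describing a bipartite graph has its rows indexed by the vertices of $B$ and its columns indexed by $1,\ldots,\Delta$, where $\Delta$ bounds the degrees on the \emph{$B$-side}. In your subdivision $G_0$ every vertex $b_{uv}$ has degree exactly $2$, so $\Delta(B)=2$ and any matrix describing $G_0$ has exactly \emph{two} columns, not three; you appear to have conflated the number of columns with the bound $\Delta(A)\le 3$. Once the column count is corrected, your ``backbone'' $G_0$ alone already does the whole job, and it is in fact precisely the paper's reduction: for each edge $uv$ of $H$, the row of $b_{uv}$ must place its two incident edges in the two distinct columns, so if $a_u$ and $a_v$ are both Type~$1$ their columns differ. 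Hence the Type~$1$ vertices of a matrix, together with their columns, correspond exactly to the vertex sets of induced subgraphs of $H$ admitting a proper $2$-coloring, and the equivalence with \textsc{Maximum Properly $2$-Vertex-Colorable Subgraph} (NP-complete for maximum degree $3$ by the cited result of Choi, Nakajima and Rim) follows with target $\ell$ unchanged --- no additive constant, no gadgets.

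Consequently, your diagnosis that the skeleton captures induced $3$-colorability (``the wrong quantity'') is false, and the entire second half of your proposal --- anchors, equality gadgets, propagation of a common forbidden color, the constant $K$ --- is machinery invented to solve a problem that does not exist. That matters here because you explicitly leave that machinery unconstructed: you yourself flag the gadget design as ``the main obstacle'' and ``the crux,'' and indeed your own symmetry observation (the columns are fully interchangeable, so no gadget can forbid a \emph{named} color globally) suggests the coordination of a single common forbidden pair across all anchors may not even be achievable as described. As submitted, the proof is therefore incomplete at its decisive step. The fix is simple: delete the gadget layer, note that $\Delta(B)=2$ forces a two-column matrix, and your first paragraph plus the backbone construction becomes essentially the paper's proof verbatim (your NP-membership argument is fine as is).
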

\begin{proof}
Given a matrix $M_G$ describing a graph $G$ (which, obviously, has size polynomial in the number of vertices of $G$), one can compute in polynomial time the number of Type~$1$ vertices yielded by $M_G$. So \textsc{Maximum Number of Type~1 Vertices} is an \textsf{NP} problem. 

We now prove the \textsf{NP}-hardness of \textsc{Maximum Number of Type~1 Vertices}. Consider an instance of \textsc{Maximum Properly $2$-Vertex-Colorable Subgraph}, \textit{i.e.} a graph $G$ of maximum degree~$3$ together with an integer $\ell \geq 1$. From $G$, we construct a $(3, 2)$-bipartite graph $H$ such that the number of vertices in a maximum properly $k$-vertex-colorable subgraph of $G$ is exactly equal to the number of Type~$1$ vertices yielded by a maximum matrix $M_H$ describing $H$. Hence, $(H,\ell)$ will be a positive instance of \textsc{Maximum Number of Type~1 Vertices} if and only if $(G, \ell)$ is a positive instance of \textsc{Maximum Properly $2$-Vertex-Colorable Subgraph}.
We construct $H$ as the 1-subdivision of $G$, namely $H=A\cup B$, as follows:
\begin{itemize}
\item for every vertex $u$ of $G$, add a vertex $a_u$ to $H$,
\item for every edge $uv$ of $G$, add one vertex $b_{u,v}$ to $H$,
\item $A = \{a_u : u \in V(G)\}$ and $B = \{b_{u,v} : uv \in E(G)\}$,
\item for every edge $uv$ of $G$, add the edges $a_ub_{u,v}$ and $a_vb_{u,v}$ to $H$.
\end{itemize}
\noindent Clearly $\Delta(A) \leq 3$ and $\Delta(B) = 2$, so $H$ is a $(3, 2)$-bipartite graph. Besides, the reduction is achieved in polynomial time since the number of vertices of $H$ is $|V(G)|+|E(G)|$. Note that every two adjacent vertices $u$ and $v$ of $G$ are directly depicted in $H$ by the two vertices $a_u$ and $a_v$ which are at distance exactly~$2$ (because of $b_{u,v}$). So $u$ and $v$ cannot be assigned the same color by a partial proper $2$-vertex-coloring of $G$ while $a_u$ and $a_v$ cannot be Type~$1$ vertices of a same column of $M_H$, and vice-versa. From this fact, assuming color, say, $1$ is liken to column~$1$ of $M_H$, coloring~$1$ a vertex $u$ of $G$ is equivalent to having $a_u$ being a Type~1 vertex of column~$1$ of $M_H$. Because $\Delta(B) = 2$, note that $M_H$ has exactly two columns by definition, and so we can define a straight equivalence between the two colors used to color~$G$ and the two columns of~$M_H$. The equivalence between the two instances then follows.
\end{proof}

\end{document}